\documentclass[accepted]{uai2026} 
                        
\usepackage[american]{babel}

\usepackage{natbib} 
\usepackage{mathtools} 
\usepackage{booktabs} 
\usepackage{tikz} 
\usepackage{graphicx}
\usepackage{xcolor}
\usepackage{amsmath, amsthm, amssymb}
\usepackage{tikz-qtree}
\usepackage[dvipsnames]{xcolor}
\usepackage{mathtools}
\usepackage[utf8]{inputenc} 
\usepackage[T1]{fontenc}    
\usepackage{hyperref}       
\usepackage{url}            
\usepackage{booktabs}       
\usepackage{algorithm}%
\usepackage{algorithmicx}%
\usepackage{algpseudocode}%
\usepackage{amsfonts}       
\usepackage{nicefrac}       
\usepackage{microtype}      
\usepackage{xcolor}         
\usepackage{graphicx}
\usepackage{adjustbox}
\usepackage{amssymb, bm}
\usepackage{stmaryrd}
\usepackage{bbm}
\usepackage{float}
\usepackage{soul}
\usepackage{thm-restate}
\usepackage{tcolorbox}
\tcbuselibrary{theorems,skins}
\usepackage{enumitem}

\usepackage{mathrsfs}
\usepackage{xcolor}

\renewcommand{\matrix}[1]{\mathbf{#1}}

\newcommand{\lmap}[1]{\mathbf{#1}}

\newcommand{\ip}[2]{\langle #1, #2\rangle}

\newcommand{\nnz}[1]{\operatorname{nnz}\left(#1\right)}

\newcommand{\Var}[1]{\mathrm{Var}\left(#1\right)}

\renewcommand{\vec}[1]{\mathbf{#1}}
\newcommand{\E}{\mathbb{E}}
\newcommand{\Expt}[1]{\E \left[ #1 \right]}
\newcommand{\Exptnb}[1]{\E [ #1 ]}
\newcommand{\indic}[1]{\mathbf{1}\left[#1\right]}

\renewcommand{\Re}[1]{\mathrm{Re}\left\{#1\right\}}
\renewcommand{\Im}[1]{\mathrm{Im}\left\{#1\right\}}
\newcommand{\Renb}[1]{\mathrm{Re}\{#1\}}
\newcommand{\Imnb}[1]{\mathrm{Im}\{#1\}}

\newcommand{\norm}[1]{\left\|#1 \right\|_2}
\newcommand{\normSq}[1]{\left\|#1 \right\|_2^2}
\newcommand{\Frob}[1]{\left\|#1 \right\|_F}

\newcommand{\modulus}[1]{\left|#1 \right|}
\newcommand{\modulusSq}[1]{\modulus{#1}^2}
\newcommand{\modulusnb}[1]{|#1 |}
\newcommand{\modulusSqnb}[1]{\modulusnb{#1}^2}

\newcommand{\CtR}{\mathrm{CtR}}

\newcommand{\kCtR}[2]{\widehat{k}_{\CtR}\left(#1,#2\right)}
\newcommand{\kCtRxy}{\kCtR{\vec{x}}{\vec{y}}}
\newcommand{\kC}[2]{\widehat{k}_{\mathrm{C}}\left(#1,#2\right)}
\newcommand{\kCxy}{\kC{\vec{x}}{\vec{y}}}
\newcommand{\kxy}{\widehat{k}\left(\vec{x},\vec{y}\right)}
\newcommand{\phiC}{\Phi_{\mathrm{C}}}

\newcommand{\phiCx}{\Phi_{\mathrm{C}}(\vec{x}^{\otimes p})}
\newcommand{\phiCy}{\Phi_{\mathrm{C}}(\vec{y}^{\otimes p})}
\newcommand{\phiCX}{\Phi_{\mathrm{C}}(X)}
\newcommand{\phiCY}{\Phi_{\mathrm{C}}(Y)}

\newcommand{\conj}[1]{\overline{#1}}

\renewcommand{\Pr}{\text{Pr}}
\newcommand{\R}{\mathbb{R}}
\newcommand{\Z}{\mathbb{Z}}

\newtheorem{thm}{Theorem}

  \newtheorem{remark}[thm]{Remark}

\newtheorem{definition}{Definition}

\newtcbtheorem[
  number within=section
]{boxeddefinition}{Definition}{
  colback=white,
  colframe=black,
  left=2pt,right=2pt,
  top=2pt,bottom=2pt,
  boxsep=1pt,
  colbacktitle=gray!8,
  coltitle=black,
  fonttitle=\bfseries,
  enhanced,
  boxrule=0.8pt
}{def}

\newif\ifwideappendix

\title{Improving TensorSketch Using Complex Random Variables}

\author[1]{\href{mailto:<cs24resch02002@iith.ac.in>?Subject=REG: Improving TensorSketch Using Complex Random Variables paper}{Amit Sharma}\textsuperscript{*}}
\author[1]{\href{mailto:<cs24mtech12006@iith.ac.in>?Subject=REG: Improving TensorSketch Using Complex Random Variables paper}{Mohammad Azhar Khan}{}\textsuperscript{*}}
\author[1]{\href{mailto:<rameshwar@cse.iith.ac.in>?Subject=REG: Improving TensorSketch Using Complex Random Variables paper}{Rameshwar Pratap}}
\author[2]{\href{mailto:<kk054@bucknell.edu>?Subject=REG: Improving TensorSketch Using Complex Random Variables paper}{Keegan Kang}}
\affil[1]{%
    Department of Computer Science and Engineering\\
    IIT Hyderabad\\
    India
}
\affil[2]{%
    Department of Mathematics and Statistics\\
    Bucknell University\\
    USA
}
  
\begin{document}
\maketitle
{\let\thefootnote\relax\footnotetext{\textsuperscript{*} Equal contribution.}}

\begin{abstract}
\texttt{TensorSketch} by~\cite{pham2013fast,kar2012random} provides efficient sketching algorithms for high-dimensional polynomial kernels $\vec{x}^{\otimes p} \in \R^{d^p}$. \cite{kar2012random} uses dense Johnson-Lindenstrauss (JL)-type projections with computational cost $O(pDd)$ , where $D$ denotes the sketch dimension, whereas~\cite{pham2013fast} extends the sparse \texttt{CountSketch}~\citep{count_sketch} algorithm, yielding a faster algorithm for high-dimensional sparse inputs with running time $O\big(p(\nnz{\vec{x}} + D \log D)\big)$. However, the variance of both estimators grows exponentially with the polynomial degree $p$, scaling as $3^{p}/D$. Recent work by~\cite{pmlr-v206-wacker23a} showed that using complex-valued distribution reduces this dependence to $2^{p}/D$ for the approach of~\cite{kar2012random}. However, their method relies on dense JL-type projections with computational cost $O(pDd)$ and does not extend to the algorithm of~\cite{pham2013fast}.

In this work, we introduce a simple variant of \texttt{TensorSketch}~\citep{pham2013fast} that achieves the same variance bound as~\cite{pmlr-v206-wacker23a}, while retaining its advantage of the input-sparsity running time. We validate our results with supporting experiments on synthetic and real-world datasets.

\end{abstract}

\section{Introduction}\label{sec:intro}

Polynomial kernels (Definition~\ref{def:poly_kernel}) 
are widely used in machine learning to model higher-order, non-linear interactions among input features.
For vectors $\vec{x}, \vec{y} \in \mathbb{R}^d$, a degree-$p$ polynomial kernel is defined as
\(
k(\vec{x},\vec{y}) = \ip{\vec{x}}{\vec{y}}^p
\).
This kernel is equivalent to mapping $\vec{x} \in \mathbb{R}^d$ to its $p$-fold Kronecker product
\(
\vec{x}^{\otimes p} \in \mathbb{R}^{d^p},
\)
which is
\(
\ip{\vec{x}^{\otimes p}}{\vec{y}^{\otimes p}}
\equiv \ip{\vec{x}}{\vec{y}}^p .
\)

As the dimension $d^p$ grows exponentially with $p$, directly computing these inner products is computationally infeasible. A dense Johnson-Lindenstrauss (JL) transform~\citep{johnson1984extensions} can reduce their dimensionality while approximately preserving their pairwise inner product, however applying it requires time proportional to $d^p$, which is exponential in $p$.

To address this inefficiency, prior works such as \cite{kar2012random} and \cite{pham2013fast} proposed randomized sketching techniques to approximate polynomial kernels efficiently.
\cite{kar2012random} introduced random feature maps based on JL-type projections. 
They define a randomized linear map 
$\lmap{S} : \mathbb{R}^{d^p} \to \mathbb{R}^{D}$ by
$\lmap{S}(\vec{x}^{\otimes p}) := \left(\matrix{W}_1 \vec{x} \,\odot\, \cdots \,\odot\, \matrix{W}_p \vec{x}\right)/\sqrt{D}$,
where each $\matrix{W}_i \in \mathbb{R}^{D \times d}$ is a random projection matrix
(e.g., Gaussian or Rademacher with \textit{i.i.d.} entries), and $\odot$ denotes the
element-wise (Hadamard) product.
Such that
\(
\kxy :=
\ip{\lmap{S}(\vec{x}^{\otimes p})}{\lmap{S}(\vec{y}^{\otimes p})}.
\) This estimator can be termed as a JL-type variant of \texttt{TensorSketch}, and  its computational cost scales with $O(p D d)$. 
 \cite{pham2013fast} propose \texttt{TensorSketch}, which combines \texttt{CountSketch}~\citep{count_sketch} with Fast Fourier Transform (FFT)-based convolution to compute the sketch implicitly. 
This avoids forming the full vector and runs in input-sparsity time 
\(
O\big(p(\nnz{\vec{x}} + D \log D)\big), 
\) making it preferred over \cite{kar2012random} for high-dimensional sparse data. The variance of both algorithms  \cite{kar2012random, pham2013fast,pham2025tensorsketchfastscalable} grows as $3^p/D$.

Recent progress by \cite{wacker2022improved} introduced a complex-valued variant of JL-type \texttt{TensorSketch}. It improves the variance dependence of the sketch on the polynomial degree, reducing it from $3^{p}$ to $2^{p}$. 
This improvement is achieved by incorporating complex-valued randomness into the sketch. 
However, the resulting sketch vectors are
complex-valued and cannot be directly
compared with their real-valued
counterparts. To address this, \cite{pmlr-v206-wacker23a} proposed the \emph{Complex-to-Real (CtR)} construction. 
It first computes a complex random feature map of the embedding dimension half of the size of its real counterpart and then forms a real embedding by concatenating its real and imaginary parts.
This preserves inner products and achieves improved variance bounds while yielding a real-valued sketch.
However, as a JL-type method, it still requires dense multiplications with cost $O(p D d)$, which can be inefficient for high-dimensional sparse data.  

\noindent In this work, we address the above limitation by developing the
\emph{Complex-to-Real (CtR)} variant of \texttt{TensorSketch}. 
The proposed construction obtain variance improvements comparable to the 
complex JL-type estimator of~\cite{pmlr-v206-wacker23a}, while retaining the 
input-sparsity running time guarantees of \texttt{TensorSketch}~\citep{pham2013fast}.

\noindent\textbf{Contributions.}
We propose a \emph{Complex-to-Real} variant of \texttt{TensorSketch} for degree-$p$ polynomial kernels. It combines a random function whose values are drawn independently and uniformly from the fourth roots of unity with FFT-based tensor sketching while producing real-valued embeddings. We prove that the resulting estimator is unbiased for $\ip{\vec{x}^{\otimes p}}{\vec{y}^{\otimes p}}$. We also derive an upper bound on variance with $2^{p}$-type dependence on the degree, improving over the classical \texttt{TensorSketch}~\citep{pham2013fast} variance scaling, while retaining the sketching time $O(p(\nnz{\vec{x}} + D \log D))$. The algorithm is defined in Definition~\ref{def:ctr-tensorsketch}, and its theoretical guarantees are stated in Theorem~\ref{thm:complex_countsketch_degree_p}.

 We note that the use of complex-valued random variables in sketching algorithms has been explored in prior work. For example, \cite{pmlr-v206-wacker23a,wacker2022improved} employ complex random variables to construct sketches for polynomial kernels, while \cite{meyer2026hutchinson} use them for trace estimation of implicit matrices. The key idea in \cite{pmlr-v206-wacker23a,wacker2022improved} is to sketch $\mathbf{x}^{\otimes p}$ using element-wise products of independent sketches of the factors $(\mathbf{x}_1,\ldots,\mathbf{x}_p)$. Owing to this independence structure, Khintchine's inequality (Definition~\ref{def:Khintchine Inequality}) can be applied, and a simple induction on $p$ yields a tighter upper bound in the complex setting than in the real-valued case.

In contrast, \texttt{TensorSketch} is built upon the hash-based \texttt{CountSketch} framework, where the sketch components are not independent and no direct analogue of Khintchine's inequality is available. As a result, the techniques from prior work do not extend to our setting. Moreover, our analysis of \texttt{CountSketch} with complex-valued random variables (Theorem~\ref{thm:cs-unbiasedness}, Appendix) shows that, by itself, the complex construction of \texttt{CountSketch} provides no variance reduction over its real-valued version. Therefore, our result demonstrating improved variance for \texttt{TensorSketch} with complex random variables is non-trivial and stems from the vanishing of certain cross terms in the variance analysis, which does not occur in the real-valued setting.
\noindent \textbf{Implications of our results.}
\texttt{TensorSketch}~\citep{pham2013fast} enables efficient training of linear SVMs~\citep{sun2018but,li2019towards}, supports scalable deep learning and the theoretical analysis of over-parameterized neural networks~\citep{yehudai2019power,zandieh2021scaling}, and has been successfully applied to compact bilinear pooling for fine-grained visual recognition~\citep{gao2016compact} as well as multimodal fusion architectures~\citep{fukui2016multimodal}. Polynomial kernels themselves are widely adopted in applications including natural language processing~\citep{goldberg2008splitting}, recommender systems~\citep{rendle2010factorization}, and genomics~\citep{aschard2016incorporating}. Our proposed CtR \texttt{TensorSketch} achieves more accurate estimates than \texttt{TensorSketch}~\citep{pham2013fast} while maintaining the same asymptotic time complexity, making it a promising alternative for the above applications.

\noindent \textbf{Organization of the paper.}
The remainder of this paper is organized as follows. 
Section~\ref{sec:related_work} reviews prior work on polynomial kernel approximation and randomized sketching. 
Section~\ref{sec:preliminaries} presents necessary background, including \texttt{CountSketch}, \texttt{TensorSketch}, polynomial kernels, and the CtR framework. 
Section~\ref{sec:ctr_sketches}, proposes our algorithm and states theoretical guarantees on its accuracy and efficiency.   
Section~\ref{Experiments_sec} reports empirical comparisons of real and JL-type methods with our approach. Section~\ref{sec:conclusion} concludes and outlines some future research directions.
\section{Related Work}\label{sec:related_work}

Polynomial kernels can be expressed via tensor feature maps, where the representation is the tensor product $\bigotimes_{i=1}^{p} \vec{x}_i$ of vectors $\vec{x}_1 \in \mathbb{R}^{d_1},\ldots,\vec{x}_p \in \mathbb{R}^{d_p}$ to capture higher–order interactions. Explicit construction requires $O\!\left(\prod_{i=1}^{p} d_i\right)$ memory and is infeasible even for moderate $p$ or dimensions $d$. To avoid this blow-up, prior work proposes sketching methods that compute $\matrix{S}\!\left(\bigotimes_{i=1}^{p} \vec{x}_i\right)$ without forming the full tensor, including JL-type product embeddings~\citep{kar2012random} and hashing-based \texttt{TensorSketch} constructions built on \texttt{CountSketch}~\citep{pham2013fast,pham2025tensorsketchfastscalable}. For $\vec{x},\vec{y}\in\mathbb{R}^{d}$, both    \texttt{TensorSketch} estimators have variance
\(
\le\frac{3^p - 1}{D}\!\left(\normSq{\vec{x}} \normSq{\vec{y}}\right),
\)
which yields $3^{p}$-type variance growth. Both JL-type and hashing-based estimators scale as $\Theta(3^{p}/D)$ for degree-$p$ features, but \texttt{TensorSketch}~\citep{pham2013fast,pham2025tensorsketchfastscalable} is more efficient due its input-sparsity time complexity.

Recent work shows that complex-valued distributions reduce the variance of randomized feature maps. Complex JL-type constructions for polynomial kernels achieve lower variance than their real-valued counterparts~\citep{wacker2022improved}. In particular, replacing real Rademacher or Gaussian variables with complex-valued distributions improves the variance dependence from $3^{p}$-type to $2^{p}$-type. Further work by~\cite{pmlr-v206-wacker23a} introduced the \emph{Complex-to-Real (CtR)} construction, preserving the variance improvement of complex-valued sketches while yielding real-valued embeddings. These embeddings enables direct comparison with real sketches. However, these approaches rely on dense JL-style projections and therefore incur higher computational cost especially for high-dimensional sparse data.

Building on the hashing-based sketching framework of~\citep{pham2013fast}, we introduce a \emph{Complex-to-Real (CtR)} variant of \texttt{TensorSketch} for polynomial kernel approximation. Our method uses fourth roots-of-unity  instead of real Rademacher variables and applies a structured \emph{complex-to-real} transformation, yielding unbiased real-valued embeddings with improved variance dependence.

In contrast, our method achieves variance reduction through a lightweight modification of the sketch construction. This is followed by a \emph{Complex-to-Real (CtR)} conversion to produce a real-valued embedding that makes it possible to compare it fairly with its real counterpart. The resulting sketch preserves the original structure and input-sparsity running time while achieving provable variance improvements over original \texttt{TensorSketch}~\citep{pham2013fast}.

Variance reduction for randomized sketching has been widely studied using statistical variance reduction techniques such as control variates (CV) and maximum likelihood estimation (MLE). The CV method reduces variance by leveraging a correlated auxiliary variable with a known expectation, while MLE estimates unknown parameters by maximizing the likelihood of the observed data, often yielding statistically efficient estimators. These techniques have been extensively investigated for a variety of classical randomized sketching methods, including Johnson--Lindenstrauss (JL) transforms~\cite{Li2006very}, CountSketch~\cite{DBLP:conf/uai/PratapK21}, Tug-of-War sketches~\cite{DBLP:conf/acda/PratapVK21}, signed random projections~\cite{Pmlr-v80-kang18b}, feature hashing~\cite{DBLP:journals/ml/VermaPT22}, and compressed matrix multiplication~\cite{DBLP:journals/ipl/VermaDPT25}, among others. However, to the best of our knowledge, analogous variance reduction techniques have not yet been developed for \texttt{TensorSketch}. While these approaches can substantially reduce estimator variance, they typically incur additional computational and algorithmic overhead.

\section{Preliminaries}\label{sec:preliminaries}
\textbf{Notation:}
We use the following notation in the paper. Bold lowercase letters
(e.g., $\vec{x},\vec{y}$) denote vectors, bold uppercase letters
(e.g., $\matrix{S},\matrix{T}$) denote matrices. For a positive integer $D$, we write
$[D]:=\{1,2,\ldots,D\}$ for the corresponding index set. For $d,p \in \mathbb{N}$, we denote by $[d]^p$ the set of all
$p$-tuples $(i_1,\ldots,i_p)$ with $i_j \in [d]$. The sets
$\R$, $\Z$ and $\mathbb{C}$ denote the real, integer and complex domains, respectively.
For any vector $\vec{x}$,
$\nnz{\vec{x}}$ denotes the number of its nonzero entries.
The symbol $\ip{\vec{x}}{\vec{y}}$ denotes the standard inner
product, $\norm{\vec{x}}$ the Euclidean norm, and $\Frob{\matrix{S}}$ the
Frobenius norm of a matrix. For a complex number $z \in \mathbb{C}$ written as $z = a + ib$, its complex conjugate is defined as
$\conj{z}$. The norm of $z$ is
\(
|z| = \sqrt{a^2 + b^2} = \sqrt{z\,\conj{z}}.
\) The Kronecker product is written as $\otimes$, while “$\cdot$’’
denotes standard matrix multiplication and $\odot$ denotes the elementwise (Hadamard) product.
The indicator function is denoted by $\indic{\cdot}$.
The probabilistic quantities use $\Pr[\cdot]$ for probability,
$\Expt{\cdot}$ for expectation and $\Var{\cdot}$ for variance.

\begin{definition}[\texttt{CountSketch}~\citep{count_sketch}] \label{count_sketch_def}
Given an input vector $\vec{y} \in \R^{d}$, the \texttt{CountSketch} is a randomized linear map $\mathbf{T} \in \R^{D \times d}$ that maps $\vec{y}$ to a lower-dimensional vector $\vec{z} = \matrix{T}\vec{y} \in \R^D.$ The \texttt{CountSketch} matrix $\matrix{T}$ is constructed by two hash functions: (a) $h \colon [d] \to [D]$ a $2$-wise independent hash function,  and (b) $s: [d] \to \{1,-1\}$ a  $4$-wise independent random sign function.

The $j^{th}$ entry of  vector $\vec{z} \in \R^D$ is computed as,
$
z_j = \sum_{h(i) = j} s(i)\, y_i, \; \forall j \in [D] .
$
\noindent The time complexity of computing the \texttt{CountSketch} is 
$O\!\left(\nnz{\vec{y}}\right)$.

For pairwise inner--product analysis, let $\vec{x},\vec{y} \in \R^{d}$ and define the 
\texttt{CountSketch} inner--product estimator as
\[
\kxy := \ip{\matrix{T}\vec{x}}{\matrix{T}\vec{y}} .
\]
Then the estimator is unbiased
\[
\Expt{\kxy }
=
\ip{\vec{x}}{\vec{y}},
\]
and its variance satisfies
\[
\Var{\kxy}
\le
\frac{1}{D}
\Big(
\ip{\vec{x}}{\vec{y}}^{2}
+
\normSq{\vec{x}} \normSq{\vec{y}}
-
2 \sum_{i} x_i^2 y_i^2
\Big).
\]

\end{definition}

\begin{definition}[\texttt{TensorSketch} of Degree $p$~\citep{pham2013fast,pham2025tensorsketchfastscalable}]
\label{def:tensorsketch_degree_p}
Let $p \ge 2$. For each $t \in [p]$, let 
$h_t : [d] \to [D]$ be $2$-wise independent hash functions and 
$\sigma_t : [d] \to \{-1,+1\}$ be $4$-wise independent random sign functions. 
The degree-$p$ \texttt{TensorSketch} matrix 
$\matrix{S} \in \R^{D \times d^p}$ is defined for all 
$r \in [D]$ and $(i_1,\dots,i_p) \in [d]^p$ by
\[
\matrix{S}_{r,(i_1,\dots,i_p)} 
= \left(\prod_{t=1}^{p} \sigma_t(i_t)\right)
\indic{
\sum_{t=1}^{p} h_t(i_t) \equiv r \pmod D
}.
\]

\noindent
For any $\vec{x} \in \mathbb{R}^d$, the sketch 
$\matrix{S}\!\left(\vec{x}^{\otimes p}\right)$ can be computed implicitly in 
time $O(p(\nnz{\vec{x}} + D \log D))$ using FFT-based 
convolution, without explicitly forming $\vec{x}^{\otimes p}$.\\
Let $\vec{x},\vec{y}\in\R^{d}$, then estimate for pairwise inner--product is defined as follows: 
\[
\kxy
:=
\ip{
\matrix{S}\!\left(\vec{x}^{\otimes p}\right)
}{
\matrix{S}\!\left(\vec{y}^{\otimes p}\right)
}.
\]
Then \texttt{TensorSketch} provides an unbiased estimator of the 
degree--$p$ polynomial kernel
\[
\Expt{\kxy}
=
\ip{\vec{x}}{\vec{y}}^{p},
\]
and its variance satisfies
\begin{align*}
\Var{\kxy}
&\le
\frac{1}{D}
\Bigg(
\Big(
2\ip{\vec{x}}{\vec{y}}^{2}
+
\normSq{\vec{x}}\,\normSq{\vec{y}} \cdots \\
&\qquad
\cdots - 2\sum_{i=1}^{d} x_i^{2}y_i^{2}
\Big)^{p}
-
\ip{\vec{x}}{\vec{y}}^{2p}
\Bigg), \\
&\le
\frac{3^{p}-1}{D}
\,\norm{\vec{x}}^{2p}
\,\norm{\vec{y}}^{2p}.
\end{align*}
\end{definition}

\begin{definition}[Polynomial Kernel~\citep{10.7551/mitpress/4175.001.0001}]\label{def:poly_kernel}
We consider polynomial kernels of degree $p\in\mathbb{N}$ of the form
\[
k(\vec{x},\vec{y})=\left(\gamma\,\vec{x}^\top \vec{y}+\nu\right)^p,
\]
for $\vec{x},\vec{y}\in \R^d$ with $\gamma,\nu\ge 0$. The parameters $\gamma$ and $\nu$
can be absorbed into the inputs by introducing the augmented vectors
\[
\tilde{\vec{x}}=\bigl(\sqrt{\gamma}\,\vec{x}^\top,\;\sqrt{\nu}\bigr)^\top,\qquad
\tilde{\vec{y}}=\bigl(\sqrt{\gamma}\,\vec{y}^\top,\;\sqrt{\nu}\bigr)^\top \in \R^{d+1}.
\]
With this transformation, the kernel admits a homogeneous representation,
\[
\bigl(\gamma\,\vec{x}^\top \vec{y}+\nu\bigr)^p
=
(\tilde{\vec{x}}^\top \tilde{\vec{y}})^p
=
\bigl(\tilde{\vec{x}}^{\otimes p}\bigr)^\top \tilde{\vec{y}}^{\otimes p},
\]
\noindent where $\tilde{\vec{x}}^{\otimes p}$ denotes the $p$-fold tensor product of $\tilde{\vec{x}}$.
Hence, without loss of generality, we may treat the polynomial kernel as a homogeneous
kernel in the augmented space $\R^{d+1}$.
\end{definition}

\begin{definition}[Khintchine Inequality~\cite{haagerup2007best,pmlr-v206-wacker23a}]\label{def:Khintchine Inequality}
Let $\mathbf{x}=(x_1,\ldots,x_d)\in\mathbb{R}^d$ and let
$\{\varepsilon_i\}_{i=1}^d$ be independent Rademacher random variables.
For every $p>0$, there exists a constant $K_p$ such that
\begin{align}
\left(
\mathbb{E}
\left|
\sum_{i=1}^{d} x_i \varepsilon_i
\right|^p
\right)^{1/p}
\leq
K_p \|\mathbf{x}\|_2.
\end{align}

The optimal constants are known explicitly.

\begin{itemize}
    \item If $\varepsilon_i\in\{-1,+1\}$ are real-valued Rademacher variables, then
    \begin{align}
    K_p =
    \begin{cases}
    1, & 0<p\le 2,\\[2mm]
    \sqrt{2}\,
    \pi^{-\frac{1}{2p}}
    \Gamma\!\left(\frac{p+1}{2}\right)^{\frac{1}{p}},
    & p>2.
    \end{cases}
    \end{align}

    \item If $\varepsilon_i$ are complex Rademacher variables taking values
    in $\{1,-1,i,-i\}$ uniformly at random, then
    \begin{align}
    \hat{K_p}
    =
    \Gamma\!\left(\frac{p}{2}+1\right)^{\frac{1}{p}}.
    \end{align}
\end{itemize}

Moreover, for every $p>2$, the complex constant $\hat{K_p}$ is strictly smaller than
the corresponding real constant $K_p$.
\end{definition}

\begin{definition}[Framework for Complex-to-Real (CtR) Sketches]\label{CtR_Framework}
\cite{pmlr-v206-wacker23a} suggests a framework of analysing sketching algorithms involving complex random variables.   Let $z = a + ib$ be a complex random variable  with $a,b \in \R$, we have $\modulusSq{z} = a^2 + b^2
\quad\text{and}\quad \Re{z^2} = a^2 - b^2.$
Combining both gives $a^2 = \frac{1}{2}\left(\modulusSq{z} + \Re{z^2}\right).$
The scalar $a$ is real-valued and its variance
$\Var{a} = \Expt{a^2} - \Expt{a}^2$
is therefore
\begin{align}
\Var{a}
& = \frac{1}{2}\Renb{\Exptnb{\modulusSqnb{z}} + \Exptnb{z^{2}} - 2\Exptnb{a}^{2}}. \label{eq:variance of real part of estimate}
\end{align}
Let $\vec{x},\vec{y} \in \mathbb{R}^{d}$ and let 
$\phiC:\mathbb{R}^{d^{p}} \rightarrow \mathbb{C}^{\frac{D}{2}}$. 
Then $\kCxy := \phiCx\conj{\phiCy}^{\top} \in \mathbb{C}$ 
is a complex-valued estimate of the polynomial kernel $k(\mathbf{x,y})$, and hence 
$\kCxy = \Renb{\kCxy} + i  \Imnb{\kCxy}$. 
The \emph{Complex-to-Real} sketch is defined as follows:
\begin{align*}
   &\kCtRxy := \Renb{\kCxy},\\
&=\!\Re{\phiCx}^\top \! \Re{ \phiCy} \!+\\ &\quad \cdots +\! \Im{\phiCx}^\top\! \Im{\phiCy},\\ 
&=\Phi_{\CtR}(\vec{x})^\top \Phi_{\CtR}(\vec{y}).
\end{align*}
where, $\Phi_{\CtR}(\vec{x}) : = [\Re{\phiC},\Im{\phiC}] \in \mathbb{R}^{D}.$ We now derive the variance of $\kCtRxy$ using Equation~\eqref{eq:variance of real part of estimate}:
\begin{align}
&\Var{\kCtRxy}\notag\\
 &= \frac{1}{2}\Renb{\Exptnb{\modulusSqnb{\widehat{k}_{C}}} + \Exptnb{\widehat{k}_{C}^{2}} + 2\Exptnb{\Renb{\widehat{k}_{C}}}^{2}} \notag\\
 &= \frac{1}{2}\Renb{\Exptnb{\modulusSqnb{\widehat{k}_{C}}} + \Exptnb{\widehat{k}_{C}^{2}} + 2\Exptnb{\widehat{k}_{C}}^{2}}, \label{eq:ctr_variance_structure}
\end{align}
where, $\widehat{k}_{C}\!:=\!\kCxy\! \in\! \mathbb{C}.$ This framework plays a key role in computing the expectation and variance of $\kCtRxy$. 
\end{definition}

\begin{definition}[\emph{Complex-to-Real} Polynomial Sketch~\citep{pmlr-v206-wacker23a}]
Let $p \in \mathbb{N}$ and $D = 2k$ for some $k \in \mathbb{N}$. 
For each $i \in [p]$, let 
$\matrix{W}_i \in \mathbb{C}^{D/2 \times d}$ be a random matrix whose rows are sampled 
independently from a zero-mean distribution satisfying
$\Expt{\mathbf{w} \mathbf{w}^{*}} = \matrix{I}_d$ (e.g., complex Gaussian or complex Rademacher) where $*$ refers to conjugate-transpose. Define the complex random feature map
\[
\Phi_C(\vec{x}^{\otimes p}) 
:= \sqrt{\frac{2}{D}}\;
\big( \matrix{W_1} \vec{x} \odot \matrix{W_2} \vec{x} \odot \cdots \odot \matrix{W_p} \vec{x} \big)
\;\in\; \mathbb{C}^{D/2},
\]

Then, the \emph{Complex-to-Real (CtR) sketch} of $\vec{x} \in \R^{d^p}$ is the 
real-valued vector $\Phi_{\CtR}(\vec{x}^{\otimes p}) \in \mathbb{R}^{D}$ defined by
\[
\Phi_{\CtR}(\vec{x}^{\otimes p})
:=
\begin{bmatrix}
\Re{\Phi_{\CtR}(\vec{x}^{\otimes p})} \\
\Im{\Phi_{\CtR}(\vec{x}^{\otimes p})}
\end{bmatrix} \in \mathbb{R}^{D}.
\]

For any $\vec{x}^{\otimes p},\vec{y}^{\otimes p} \in \R^{d^p}$, the kernel estimator can be defined as
\begin{align*}
    \kCtRxy &=\Phi_{\CtR}(\vec{x})^\top \Phi_{\CtR}(\vec{y}).
\end{align*}

For both Gaussian and Rademacher constructions,
the CtR estimator satisfied the following:

\begin{align*}
\Expt{\kCtRxy}&=\ip{\vec{x}}{\vec{y}}^{p},\\
\Var{\kCtRxy}
&\le
\frac{2^{p+1}-2}{D}
\,\|\vec{x}\|_2^{2p}
\,\|\vec{y}\|_2^{2p}.
\end{align*}
\end{definition}

\section{\emph{Complex-to-Real(CtR)} \texttt{TensorSketch}}
\label{sec:ctr_sketches}

\begin{table*}[t]
\centering
\begin{tabular}{|c|c|c|}
\hline
\textbf{Algorithm} & \textbf{Variance} & \textbf{Sketching Time} \\
\hline
\texttt{TensorSketch}(CtR) & $\frac{2^{p+1} - 2}{D}\norm{\vec{x}}^{2p}\,\norm{\vec{y}}^{2p}$ & $O(p(\nnz{\vec{x}} + \nnz{\vec{y}} + D \log D))$ \\
\hline
\texttt{TensorSketch}(Real)\citep{pham2013fast,pham2025tensorsketchfastscalable} & $\frac{3^{p} - 1}{D}\norm{\vec{x}}^{2p}\,\norm{\vec{y}}^{2p}$ & $O(p(\nnz{\vec{x}} + \nnz{\vec{y}} + D \log D)) $ \\
\hline
JL(CtR Radamacher)\citep{pmlr-v206-wacker23a} &   $\frac{2^{p+1} - 2}{D}\norm{\vec{x}}^{2p}\,\norm{\vec{y}}^{2p}$ & $O(pDd)$ \\
\hline
JL(CtR Guassian)\citep{pmlr-v206-wacker23a} &  $\frac{2^{p+1} - 2}{D}\norm{\vec{x}}^{2p}\,\norm{\vec{y}}^{2p}$ & $O(pDd)$ \\
\hline
\end{tabular}
\caption{\textbf{Comparison of variance bounds and sketching time for CtR \texttt{TensorSketch} and baseline methods. \label{tab:ctr sketches}}
Here, $\vec{x},\vec{y}\in\mathbb{R}^{d}$ are input vectors, $p$ denotes the polynomial degree, and $D$ is the sketch dimension. All methods provide unbiased estimates of the inner product between the vectors $\vec{x}^{\otimes p}, \vec{y}^{\otimes p} \in \mathbb{R}^{d^{p}}$, i.e.,  $\ip{ \vec{x}^{\otimes p}}{\vec{y}^{\otimes p} }$.}
\end{table*}
\noindent
In this section, we first define Complex \texttt{CountSketch} (Definition~\ref{def:ctr-countsketch}). Building on its circular convolution structure, we then introduce \emph{Complex-to-Real} \texttt{TensorSketch}. (Definition~\ref{def:ctr-tensorsketch}) along with the associated kernel. Theorem~\ref{thm:complex_countsketch_degree_p} discusses the guarantees of the estimators proposed in Definition~\ref{def:ctr-tensorsketch}.

\begin{definition}[Complex \texttt{CountSketch} Mapping]
\label{def:ctr-countsketch}
Let $\vec{x} \in \mathbb{R}^d$ and $D$ be the   sketching dimension.   
Sample a Complex \texttt{CountSketch} matrix $\matrix{C} \in \mathbb{C}^{D \times d}$ with the following two independent functions
\begin{itemize}
    \item $h : [d] \to [D]$  is an universal hash function that assigns each coordinate
independently and uniformly to one of the $D$ buckets, and
    \item $s : [d] \to \{1, \omega, \omega^{2}, \omega^{3}\}$ is a random function whose values are drawn
independently and uniformly from the four fourth roots of unity.
\end{itemize}
\noindent
The $j$-th coordinate of the sketched vector $\matrix{C}\vec{x}$ is given by
\begin{align*}
(\matrix{C}\vec{x})_{j}
= \sum_{i=1}^{d} s(i)\,  \indic{h(i)=j} x_i\,,
\qquad \forall\, j \in [D].
\end{align*}

\end{definition}
We define  CtR \texttt{TensorSketch}, which extends the Complex \texttt{CountSketch} by applying $p$ independent sketches and combining them using FFT-based convolution to efficiently sketch the vector $\vec{x}^{\otimes p},\vec{y}^{\otimes p} \in \mathbb{R}^{d^{p}}$.
\begin{definition}{(\emph{Complex-to-Real (CtR)} \texttt{TensorSketch} Mapping)}
\label{def:ctr-tensorsketch}
Let $\vec{x} \in \mathbb{R}^{d}$ and fix an integer $p \ge 1$.  
Define a Complex \texttt{TensorSketch} mapping
\(
\matrix{C} : \mathbb{R}^{d^p} \to \mathbb{C}^{D/2}
\)
constructed from $p$ independent Complex \texttt{CountSketch} mappings.  
For each $r \in [p]$, let
\begin{itemize}
    \item $h_r : [d] \to [D/2]$ be an universal hash function that assigns each coordinate
independently and uniformly to one of the $D$ buckets, and
    \item $s_r : [d] \to \{1, \omega, \omega^{2}, \omega^{3}\}$ be a random function whose values are drawn
independently and uniformly from the fourth roots of unity.
\end{itemize}

\noindent
For each $r \in [p]$, let $\matrix{C}_r \in \mathbb{C}^{D/2 \times d}$ denote the Complex \texttt{CountSketch} matrix induced by functions $(h_r, s_r)$ as in Definition~\ref{def:ctr-countsketch}.  
The \texttt{TensorSketch} of $\vec{x}^{\otimes p} \in \mathbb{R}^{d^p}$ is defined implicitly via convolution of the $p$ sketches, and can be computed efficiently using the Fast Fourier Transform as
\begin{align}
\phiC(\vec{x}^{\otimes p}) &:= \matrix{C} \vec{x}^{\otimes p},\\
&= \operatorname{FFT}^{-1}\!\left(
\bigodot_{r=1}^{p}
\operatorname{FFT}\!\left(\matrix{C}_r \vec{x}\right)
\right)
\in
\mathbb{C}^{D/2}, \label{eq:complex-tensorsketch}
\end{align}
where the product is taken element-wise. We define the CtR sketch as 
\begin{align*}
\!\Phi_{\CtR}(\vec{x}^{\otimes p}) \!
:=\!
\Bigl(
\Renb{\phiC(\vec{x}^{\otimes p})_{1}}, \ldots,
\Renb{\phiC(\vec{x}^{\otimes p})_{D/2}},\\
\Imnb{\phiC(\vec{x}^{\otimes p})_{1}}, \ldots,
\Imnb{\phiC(\vec{x}^{\otimes p})_{D/2}}
\Bigr)^{\top}
\in \mathbb{R}^{D}.
\end{align*}

\end{definition}
A detailed proof of results (Lemma~\ref{lem:sign_fun_4th_root}, \ref{lem:complex_to_real AMS}, and Theorem~\ref{thm:complex_countsketch_degree_p}) stated in this section is presented in Appendix~\ref{CtR_tensorsketch_appendix}.

\begin{restatable}{lem}{fourththroot}
\label{lem:sign_fun_4th_root}
Let $\omega = e^{2\pi i / 4} = i$, and let $s : [d] \to \{1,\ \omega,\ \omega^{2},\ \omega^{3}\}$
be a random  function such that the values $\{s(i)\}_{i \in [d]}$ are drawn independently and uniformly from the four fourth roots of unity. Then, for every $i \in [d]$, the following identities hold
\begin{align*}
    \Expt{s(i)} &= 0, \\
    \Exptnb{\modulusSq{s(i)}} &= 1, \\
    \Expt{s(i)^{2}} &= 0.
\end{align*}
Furthermore, for any pair of distinct indices $i \neq j$, independence implies
\begin{align*}
    \Exptnb{s(i)\,\conj{s(j)}} 
        = \Exptnb{s(i)}\,\Expt{\conj{s(j)}}
        = 0.
\end{align*}
\end{restatable}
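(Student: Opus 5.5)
This lemma is a direct computation with the uniform distribution on $\{1,i,-1,-i\}$, so the plan is to evaluate each moment explicitly. Since $s(i)$ takes each of the values $\omega^k$, $k\in\{0,1,2,3\}$, with probability $1/4$, every expectation is a finite average over the four roots of unity.

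\textbf{First moment.} We compute
\[
\Expt{s(i)}=\tfrac14(1+\omega+\omega^2+\omega^3).
\]
This vanishes by the geometric-series identity $\sum_{k=0}^{3}\omega^k=(\omega^4-1)/(\omega-1)=0$, valid because $\omega\neq 1$. Equivalently, with $\omega=i$ the sum is $1+i-1-i=0$.

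\textbf{Second moments.} For the modulus, each root of unity satisfies $\modulus{\omega^k}=1$. Hence $\modulusSq{s(i)}=1$ deterministically, and its expectation is $1$. For the square, $s(i)^2$ takes the values $\omega^{2k}\in\{1,-1,1,-1\}$, so
\[
\Expt{s(i)^2}=\tfrac14(1-1+1-1)=0.
\]
This is the one property that distinguishes complex from real Rademacher variables: a real Rademacher variable has $\Expt{\varepsilon^2}=1$. It is the property the later variance analysis uses to kill the $\Exptnb{\widehat{k}_C^2}$ cross terms in Equation~\eqref{eq:ctr_variance_structure}.

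\textbf{Cross moment.} For $i\neq j$, the values $s(i)$ and $s(j)$ are independent by hypothesis, so $s(i)$ and $\conj{s(j)}$ are independent (conjugation is a measurable function). Therefore the expectation of the product factorizes. Since the conjugate of a uniform fourth root of unity is again uniform on the same set, we get $\Expt{\conj{s(j)}}=\conj{\Expt{s(j)}}=0$ by linearity of expectation and the first-moment computation. The product is therefore $0$.

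There is no real obstacle. The only points needing care are applying the independence hypothesis in the cross term and noting that conjugation commutes with expectation. I would also remark that pairwise independence suffices for this last identity.
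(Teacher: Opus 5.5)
Your proposal is correct and follows the paper's proof essentially line for line: you average directly over $\{1,i,-1,-i\}$ for the three moments, and you factor the cross term using independence. Your extra observations, namely the geometric-series argument for the first moment and the fact that pairwise independence suffices for the last identity, are valid but not needed.
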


\noindent
Theorem~\ref{thm:complex_countsketch_degree_p} establishes the unbiasedness, variance bound, and
sketching time of the CtR \texttt{TensorSketch}.
\begin{restatable}{thm}{ctrTensorSketch}
\label{thm:complex_countsketch_degree_p}
Let $\vec{x}, \vec{y} \in \mathbb{R}^{d}$ and
$\vec{x}^{\otimes p}, \vec{y}^{\otimes p} \in \mathbb{R}^{d^{p}}$. Let $\Phi_{\CtR}$ denote a \emph{Complex-to-Real} \texttt{TensorSketch} as stated in Definition~\ref{def:ctr-tensorsketch}.  
Let $\kCtRxy := \Phi_{\CtR}(\vec{x}^{\otimes p})^{\top}\Phi_{\CtR}(\vec{y}^{\otimes p})$, then $\kCtRxy$ satisfies
\begin{align*}
\Expt{
\kCtRxy
}
&= \ip{\vec{x}^{\otimes p}}{ \vec{y}^{\otimes p}}
= \ip{\vec{x}}{\vec{y}}^p,\\
\Var{
    \kCtRxy
    }
&\le \frac{2^{p + 1}-2}{D}\,
\norm{\vec{x}}^{2p}\, \norm{\vec{y}}^{2p}.
\end{align*}

\noindent Moreover, the sketches $\Phi_{\CtR}(\vec{x}^{\otimes p})$ and
$\Phi_{\CtR}(\vec{y}^{\otimes p})$ can be computed in $O(p(\nnz{\vec{x}} + D \log D)) \ \text{and} \, O(p(\nnz{\vec{y}} + D \log D))$
time, respectively.
\end{restatable}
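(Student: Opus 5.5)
The plan is to expand everything at the level of individual tuples and then apply the CtR variance identity~\eqref{eq:ctr_variance_structure}. Write $m=D/2$. Unrolling the convolution in~\eqref{eq:complex-tensorsketch}, each coordinate of the complex sketch is
\[
\phiC(\vec{x}^{\otimes p})_j=\sum_{(i_1,\dots,i_p)\in[d]^p}\Bigl(\prod_{r=1}^p s_r(i_r)\Bigr)\indic{\textstyle\sum_r h_r(i_r)\equiv j \pmod m}\,x_{i_1}\cdots x_{i_p}.
\]
Hence
\[
\widehat{k}_{C}=\sum_{\vec{i},\vec{i}'} \prod_r s_r(i_r)\conj{s_r(i'_r)}\;\indic{H(\vec{i})=H(\vec{i}')}\; x_{\vec{i}}\,y_{\vec{i}'},
\]
where $H(\vec{i})=\sum_r h_r(i_r)\bmod m$. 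Since $s$ and $h$ are independent and the $p$ copies are independent, Lemma~\ref{lem:sign_fun_4th_root} kills every term with $\vec{i}\ne\vec{i}'$. The indicator is then trivially $1$, so $\Expt{\widehat{k}_C}=\ip{\vec{x}}{\vec{y}}^p$. This quantity is real, so taking real parts gives unbiasedness of $\kCtRxy$.

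For the variance I use $\Var{\kCtRxy}=\tfrac12\bigl(\Expt{\modulusSqnb{\widehat{k}_C}}+\Renb{\Exptnb{\widehat{k}_C^2}}\bigr)-\ip{\vec{x}}{\vec{y}}^{2p}$. Both second moments are sums over quadruples of tuples $(\vec{a},\vec{b},\vec{c},\vec{d})$, and each carries a product over $r$ of a fourth-order sign moment.

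For $\Exptnb{\modulusSqnb{\widehat{k}_C}}$ the moment is $\Exptnb{s(a)\conj{s(b)}\conj{s(c)}s(d)}$. For $\Exptnb{\widehat{k}_C^2}$ it is $\Exptnb{s(a)\conj{s(b)}s(c)\conj{s(d)}}$. Because the $s_r(i)$ are uniform fourth roots of unity, $\Exptnb{s}=\Exptnb{s^2}=\Exptnb{s^3}=0$ and $|s|=1$. A fourth-order moment is therefore nonzero (and equal to $1$) only when the indices carrying $s$ coincide, as a multiset, with those carrying $\conj{s}$. This leaves exactly \emph{two} admissible pairings per coordinate $r$, instead of the three available in the real Rademacher case. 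This vanishing of the ``$a=b$, $c=d$''-type cross pairing in $\widehat{k}_C^2$, and its analogue in $\modulusSqnb{\widehat{k}_C}$, is what replaces $3^p$ by $2^p$.

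In each case one pairing is the ``trivial'' one: it matches an $\vec{x}$-index with the $\vec{y}$-index in the same factor of $\widehat{k}_C$, so each factor's hash indicator is automatically $1$. The other pairing crosses the two factors. I then sum over the $2^p$ choices of pairing pattern, one pairing per coordinate. The all-trivial pattern contributes exactly $\ip{\vec{x}}{\vec{y}}^{2p}$ to each second moment. In both cases this contribution cancels against the mean, since $\tfrac12(2\ip{\vec{x}}{\vec{y}}^{2p})-\ip{\vec{x}}{\vec{y}}^{2p}=0$.

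For any pattern with at least one crossing coordinate $r$ where the paired indices are distinct, the hash-equality constraint involves $h_r(a_r)-h_r(b_r)$ for distinct indices. Conditioning on all other hash values and using the uniformity and independence of $h_r$ bounds its probability by $1/m=2/D$. Coordinates where the indices coincide collapse into the trivial pairing. This degenerate overlap, where the two pairings coincide, is counted twice, and I will handle it by inclusion--exclusion. It contributes a subtracted nonnegative term of the form $\sum_i x_i^2y_i^2$, as in Definition~\ref{count_sketch_def}, so dropping it only loosens the upper bound. Bounding each crossing coordinate's weight by $\normSq{\vec{x}}\normSq{\vec{y}}$ and each trivial one by $\ip{\vec{x}}{\vec{y}}^2\le\normSq{\vec{x}}\normSq{\vec{y}}$ (Cauchy--Schwarz) gives, for each of the two moments,
\[
\text{(non-trivial part)}\le \tfrac{2}{D}\bigl((\ip{\vec{x}}{\vec{y}}^2+\normSq{\vec{x}}\normSq{\vec{y}})^p-\ip{\vec{x}}{\vec{y}}^{2p}\bigr)\le\tfrac{2}{D}(2^p-1)\norm{\vec{x}}^{2p}\norm{\vec{y}}^{2p}.
\]
Summing the two moments and halving yields $\frac{2^{p+1}-2}{D}\norm{\vec{x}}^{2p}\norm{\vec{y}}^{2p}$.

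The main obstacle is the bookkeeping in the previous step. The hash constraints for the $p$ coordinates are coupled through the modular sum, so I cannot factor over $r$ directly. I must argue that a single crossing coordinate with distinct indices already costs a factor $2/D$, while trivial coordinates impose no constraint. I also must treat patterns containing several crossing coordinates without overcounting. I will follow Pham--Pagh: fix one crossing coordinate, condition on all other hashes, and note that the remaining event is $h_r(a_r)-h_r(b_r)\equiv c\pmod m$ for a fixed $c$, which has probability at most $1/m$.

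Running time is routine. Each $\matrix{C}_r\vec{x}$ costs $O(\nnz{\vec{x}})$. Then $p$ FFTs of length $D/2$, an elementwise product, and one inverse FFT cost $O(pD\log D)$. Splitting into real and imaginary parts costs $O(D)$.
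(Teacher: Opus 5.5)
Your proposal is correct and is essentially the paper's argument: expand $\widehat{k}_C$ over pairs of tuples, remove the mean cross terms with Lemma~\ref{lem:sign_fun_4th_root}, bound the collision probability of every surviving off-diagonal term by $2/D$, and evaluate the leftover fourth-root sign sums, which the paper packages via Lemma~\ref{lem:complex_to_real AMS} as $p$-th powers of the two-pairing single-coordinate moments (exactly your sum over $2^p$ pairing patterns minus the all-trivial one). Two small fixes are needed: in $\Exptnb{\widehat{k}_C^2}$ the crossing weights $x_\alpha y_\alpha x_\beta y_\beta$ can be negative, so you must either apply the ``at most $2/D$'' bound after passing to $|x_i|,|y_i|$ (as the paper does) or observe that the two hash events coincide there, which makes the probability exactly $2/D$; and the pairing that vanishes in $\widehat{k}_C^2$ is $a=c$, $b=d$, not $a=b$, $c=d$.
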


\begin{proof}
According to Definition~\ref{CtR_Framework}, we have 
$\kCtRxy := \Re{\kCxy}$, and therefore we first analyze $\kCxy$. 
Recall that $\kCxy := \phiCx \overline{\phiCy}^{\top}$, where 
$\phiCx := \matrix{C}\vec{x}^{\otimes p}$ and 
$\phiCy := \matrix{C}\vec{y}^{\otimes p}$ according to 
Definition~\ref{def:ctr-tensorsketch}. 
In particular, $\matrix{C}\vec{x}^{\otimes p}, 
\matrix{C}\vec{y}^{\otimes p} \in \mathbb{C}^{D/2}$ denote the complex 
\texttt{CountSketch} of the vectors 
$\vec{x}^{\otimes p}, \vec{y}^{\otimes p} \in \mathbb{R}^{d^{p}}$, 
constructed using the derived functions 
$H : [d]^p \mapsto [D/2]$ and 
$S : [d]^p \to \{1,\omega,\omega^{2},\omega^{3}\}$ defined as
\begin{align*}
H(i_1,\ldots,i_p) &= \left( \sum_{j=1}^{p} h_j(i_j) \right) \bmod D/2, \\
S(i_1,\ldots,i_p) &= \prod_{j=1}^{p} s_j(i_j).
\end{align*}

\noindent In the proof, we denote $X := \vec{x}^{\otimes p}$ and $Y := \vec{y}^{\otimes p}$. 
Let $u,v \in [d]^p$ be the multi-indices corresponding to the entries of 
$X,Y \in \mathbb{R}^{d^p}$. For a multi-index $u=(j_1,\dots,j_p)$, the entry 
of $X$ is defined as $X_u = \prod_{k=1}^{p} x_{j_k}$, where the associated 
linearized index is $u = 1 + \sum_{k=1}^{p}(j_k - 1)\prod_{k'=1}^{p-k} d$. 
The entries of $Y$ are defined analogously. 

We begin by analyzing $\kCxy$ as defined below,
\begin{align*}
&\kCxy := \phiCX \overline{\phiCY}^{\top} = \ip{\matrix{C} X}{\conj{\matrix{C}Y}}, \\
  &= \sum_{u,v \in [d]^p}
      X_{u}\, Y_{v}\, 
      S(u)\, \overline{S(v)}\ \indic{H(u) = H(v)}, \\
  &= \ \langle X, Y \rangle
   \!+\!
     \sum_{u \ne v}
      X_{u}\, Y_{v}\,
      S(u)\, \overline{S(v)}\, \indic{H(u) = H(v)}.
\end{align*}
Using Lemma~\ref{lem:sign_fun_4th_root}, we have 
$\Expt{S(u)\,\conj{S(v)}} = 0$ for all $u \neq v$. Hence,
\begin{align*}
\Expt{\kCxy}
= \ip{X}{Y} = \ip{\mathbf{x}^{\otimes p}}{\mathbf{y}^{\otimes p}}
= \ip{\vec{x}}{\vec{y}}^p .
\end{align*}
According to Definition~\ref{CtR_Framework}, we know $\kCtRxy := \Re{\kCxy}$, we have
\begin{align}
\Expt{\kCtRxy} = \Expt{\Re{\kCxy}} = \langle \vec{x}, \vec{y} \rangle^{p}. \label{eq:expectation-tensorsketch}
\end{align}
\noindent Using Equation~\ref{eq:ctr_variance_structure}, the variance of $\kCtRxy$ is
\begin{align} &\Var{\kCtRxy} = \frac{1}{2}\operatorname{Re}\!\Bigg\{ \Expt{ \modulusSq{\kCxy} } + \cdots \nonumber \\ &\quad \cdots + \Expt{ \left( \kCxy \right)^2 } - 2 \left( \Expt{ \kCxy } \right)^2 \Bigg\}. \label{final variance ctr tensorsketch main} \end{align}

For the variance analysis, we need to compute 
$\Expt{\modulusSq{\kCxy}}$ and $\Expt{\left(\kCxy\right)^{2}}$.
By applying Lemma~\ref{lem:sign_fun_4th_root} and 
Lemma~\ref{lem:complex_to_real AMS}, we obtain the following bounds:
\begin{align} &\Expt{\modulusSq{\kCxy}} \le \ip{\vec{x}}{\vec{y}}^{2p} + \frac{2}{D}\Bigg(\Bigg(\ip{\vec{x}}{\vec{y}}^{2} + \notag\\
&\qquad \ \cdots+ \normSq{\vec{x}}\normSq{\vec{y}} - \sum_{i=1}^{d} x_i^{2} y_i^{2}\Bigg)^{p}  - \ip{\vec{x}}{\vec{y}}^{2p}\Bigg),\label{second moment norm ctr main} \end{align}
and
\begin{align} &\Expt{ \left( \kCxy\right)^{2}}\le \ip{\vec{x}}{\vec{y}}^{2p} + \frac{2}{D}\Bigg(\Bigg(2\ip{\vec{x}}{\vec{y}}^{2} - \notag\\
&\qquad \qquad \quad \cdots - \sum_{i=1}^{d} x_i^{2} y_i^{2}\Bigg)^{p} - \ip{\vec{x}}{\vec{y}}^{2p}\Bigg).\label{second moment ctr main} \end{align}
Substituting Equations~\eqref{second moment norm ctr main}, 
\eqref{second moment ctr main}, and \eqref{eq:expectation-tensorsketch} into 
Equation~\eqref{final variance ctr tensorsketch main} yields
\begin{align}
\Var{\kCtRxy}
\le \frac{2^{p+1}-2}{D}\,
\norm{\vec{x}}^{2p}\norm{\vec{y}}^{2p}.
\end{align}

\textbf{Time Complexity.}
To compute the sketch according to Equation~\eqref{eq:complex-tensorsketch}, each complex \texttt{CountSketch} can be computed in $O(\nnz{\vec{x}})$ time. The convolution of the $p$ sketches is implemented using FFT in $O(pD\log D)$ time. Therefore, the overall time complexity of Complex-to-Real (CtR)~\texttt{TensorSketch} applied to $\vec{x}^{\otimes p}$ is $O\!\big(p(\nnz{\vec{x}} + D\log D)\big)$.
\end{proof}
\begin{remark}
    Theorem~\ref{thm:complex_countsketch_degree_p}, exhibits an improved exponential dependence on $p$ in the variance bound of the estimate, decreasing from the $3^p/D$ bound for real \texttt{TensorSketch}~\citep{pham2013fast} to $2^p/D$, while retaining the same input-sparsity sketching time. A tabular comparison among the baselines on the variance bound and running time is presented in Table~\ref{tab:ctr sketches}.
\end{remark}

Bounds derived in the Lemma~\ref{lem:complex_to_real AMS} play a key role in the proof of Theorem~\ref{thm:complex_countsketch_degree_p}. The sketch described in this lemma is a complex-valued variant of the classical AMS sketch~\cite{ALON1999137} for polynomial kernel approximation~\cite{10.5555/1347082.1347163,braverman2010ams4wiseindependenceproduct}. In particular, the variance analysis of CtR~\texttt{TensorSketch} reduces to analyzing products of independently randomized linear sketches. Each bucket of CtR \texttt{TensorSketch} behaves like a product of $p$ independent AMS-type sketches of the form $Z_{s_j}(\vec{x}) = \sum_{i=1}^d x_i s_j(i)$. When expanding the moments $\mathbb{E}[|Z|^2]$ and $\mathbb{E}[Z^2]$ in the proof of Theorem~\ref{thm:complex_countsketch_degree_p}, the resulting expressions factor across the $p$ independent random functions drawn uniformly from fourth roots of unity. Lemma~\ref{lem:complex_to_real AMS} provides an exact evaluation of  moments under fourth roots-of-unity random functions, allowing the full degree-$p$ moment to be written as the $p$-th power of a single-sketch expression.
\begin{restatable}{lem}{AMSSketch}
\label{lem:complex_to_real AMS}
Let $\vec{x}, \vec{y} \in \mathbb{R}^{d}$, let $p \ge 1$ be an integer, and let 
$s_{1}, \ldots, s_{p} : [d] \to \{1, \omega, \omega^{2}, \omega^{3}\}$ 
be independent random functions, each taking values uniformly from the four fourth roots of unity (as in Lemma~\ref{lem:sign_fun_4th_root}).  
Define
\begin{align*}
    Z \;=\; \prod_{j=1}^{p} Z_{s_j}(\vec{x}) \, \conj{Z_{s_j}(\vec{y})},
\end{align*}
where
\begin{align*}
    Z_{s_j}(\vec{x}) &= \sum_{i=1}^{d} x_i\, s_j(i), 
    &
    Z_{s_j}(\vec{y}) &= \sum_{i=1}^{d} y_i\, s_j(i).
\end{align*}
Then,
\begin{align*}
    \Expt{Z} &= \ip{\vec{x}}{\vec{y}}^p, \\
    \Var{Z} &\leq 2^{p}\, \norm{\vec{x}}^{2p}\, \norm{\vec{y}}^{2p}.
\end{align*}
\end{restatable}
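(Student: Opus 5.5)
The plan is to exploit the product structure of $Z$. Write $W_j := Z_{s_j}(\vec{x})\,\conj{Z_{s_j}(\vec{y})}$, so that $Z=\prod_{j=1}^p W_j$. Since $s_1,\dots,s_p$ are independent, the $W_j$ are independent and identically distributed. Hence $\Expt{Z}=\prod_j \Expt{W_j}$ and $\Expt{\modulusSq{Z}}=\prod_j \Expt{\modulusSq{W_j}}$. Both claims therefore reduce to single-sketch moment computations. I interpret the variance of the complex variable as $\Var{Z}=\Expt{\modulusSq{Z}}-\modulusSq{\Expt{Z}}$, which is the quantity entering $\Expt{\modulusSqnb{\widehat{k}_C}}$ in Definition~\ref{CtR_Framework}.

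\textbf{First moment.} Expand
\[
W_j=\sum_{a,b} x_a y_b\, s_j(a)\conj{s_j(b)} .
\]
By Lemma~\ref{lem:sign_fun_4th_root}, $\Expt{s_j(a)\conj{s_j(b)}}=\indic{a=b}$, since for $a=b$ the term equals $\modulusSq{s_j(a)}=1$. Hence $\Expt{W_j}=\ip{\vec{x}}{\vec{y}}$, and therefore $\Expt{Z}=\ip{\vec{x}}{\vec{y}}^p$.

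\textbf{Second moment.} Write
\[
\modulusSq{W_j}=\modulusSq{Z_{s_j}(\vec{x})}\,\modulusSq{Z_{s_j}(\vec{y})}=\sum_{a,b,c,e} x_a x_b y_c y_e\, s(a)\conj{s(b)}\,s(c)\conj{s(e)} ,
\]
dropping the index $j$. I will classify the index patterns for which $\Expt{s(a)\conj{s(b)}s(c)\conj{s(e)}}\neq 0$, using full independence together with $\Expt{s}=0$, $\Expt{s^2}=0$, $\modulusSq{s}=1$ and $s^4=1$.
\begin{itemize}
\item Any index that appears exactly once kills the term.
\item A pairing of the two unconjugated positions ($a=c$) with the two conjugated ones ($b=e$), when $a\neq b$, produces $\Expt{s(a)^2}\,\Expt{\conj{s(b)}^2}=0$. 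This is exactly where fourth roots of unity beat Rademacher signs.
\item The surviving patterns are $\{a=b,\ c=e\}$ and $\{a=e,\ c=b\}$, each contributing expectation $1$. Their overlap $a=b=c=e$ is counted twice, and it also has expectation $1$ because $\modulusSq{s}^2=1$.
\end{itemize}
Inclusion–exclusion then gives
\[
\Expt{\modulusSq{W_j}}=\normSq{\vec{x}}\normSq{\vec{y}}+\ip{\vec{x}}{\vec{y}}^2-\sum_i x_i^2y_i^2 .
\]
By Cauchy–Schwarz this is at most $2\normSq{\vec{x}}\normSq{\vec{y}}$.

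\textbf{Conclusion.} Multiplying over $j$,
\[
\Var{Z}=\Bigl(\normSq{\vec{x}}\normSq{\vec{y}}+\ip{\vec{x}}{\vec{y}}^2-\sum_i x_i^2y_i^2\Bigr)^p-\ip{\vec{x}}{\vec{y}}^{2p}\le 2^p\norm{\vec{x}}^{2p}\norm{\vec{y}}^{2p}.
\]
For the theorem it is useful to record this exact expression, since it is the form appearing in Equation~\eqref{second moment norm ctr main}.

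If the companion quantity $\Expt{Z^2}$ is also wanted, the same pattern-counting applies to $W_j^2=\sum x_ax_by_cy_e\, s(a)s(b)\conj{s(c)}\conj{s(e)}$. There the surviving pairings are $\{a=c,b=e\}$ and $\{a=e,b=c\}$, each giving $\ip{\vec{x}}{\vec{y}}^2$, while $\{a=b,c=e\}$ dies because $\Expt{s^2}=0$. Thus $\Expt{W_j^2}=2\ip{\vec{x}}{\vec{y}}^2-\sum_i x_i^2y_i^2$.

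The main obstacle is the careful bookkeeping in the fourth-moment expansion. One must verify that the conjugation pattern kills the cross pairing, which requires the fourth-root distribution rather than merely $\Expt{s}=0$. One must also handle the fully diagonal term without double counting; this is where the $-\sum_i x_i^2y_i^2$ correction arises. Everything else is a direct consequence of independence across $j$.
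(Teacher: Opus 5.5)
Your proof is correct. Its core is the same case analysis the paper performs: independence across $j$, then the pairing analysis for a single factor $W_j$. Only $\{a=b,\,c=e\}$ and $\{a=e,\,c=b\}$ survive, the unconjugated--unconjugated pairing dies because $\Expt{s(i)^2}=0$, and the doubly counted diagonal produces $-\sum_i x_i^2y_i^2$. The genuine difference is what $\Var{Z}$ means. The paper takes it through the CtR formula of Definition~\ref{CtR_Framework}, i.e.\ as the variance of $\Re{Z}$:
\[
\tfrac12\bigl(A^p+B^p\bigr)-\ip{\vec{x}}{\vec{y}}^{2p},\qquad A=\ip{\vec{x}}{\vec{y}}^2+\normSq{\vec{x}}\normSq{\vec{y}}-\textstyle\sum_i x_i^2y_i^2,\qquad B=2\ip{\vec{x}}{\vec{y}}^2-\textstyle\sum_i x_i^2y_i^2 .
\]
That route needs both $\Expt{\modulusSq{Z}}$ and $\Expt{Z^2}$. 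It must also control $B^p$ even though $B$ can be negative. For even $p$, the paper's appeal to $\sum_i x_i^2y_i^2\ge 0$ really requires $|B|\le 2\normSq{\vec{x}}\normSq{\vec{y}}$, which holds but is not what is argued. You instead use the standard complex variance $\Expt{\modulusSq{Z}}-\modulusSq{\Expt{Z}}=A^p-\ip{\vec{x}}{\vec{y}}^{2p}$. This needs only $0\le A\le 2\normSq{\vec{x}}\normSq{\vec{y}}$. It is also the stronger statement, since $\Expt{\modulusSq{Z-\Expt{Z}}}=\Var{\Re{Z}}+\Var{\Im{Z}}$, so your bound implies the paper's. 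What the paper's route buys is the exact real-part variance, which is the quantity actually used in Theorem~\ref{thm:complex_countsketch_degree_p}. Since you also record $\Expt{W_j^2}=B$, you lose nothing for that application.
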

\begin{proof}
Let
\[
Z
=
\prod_{j=1}^{p}
Z_{s_j}(\vec{x})\,
\conj{Z_{s_j}(\vec{y})},
\qquad
Z_{s_j}(\vec{x})
=
\sum_{i=1}^{d} x_i s_j(i).
\]

\medskip
\noindent
\textbf{Expectation.}
For a fixed $j$,
\[
\Expt{
Z_{s_j}(\vec{x})
\conj{Z_{s_j}(\vec{y})}
}
=
\sum_{i,k}
x_i y_k\,
\Expt{s_j(i)\conj{s_j(k)}}.
\]
By Lemma~\ref{lem:sign_fun_4th_root},
$\Expt{s_j(i)\overline{s_j(k)}}=0$ for $i\neq k$
and equals $1$ for $i=k$.
Hence,
\[
\Expt{
Z_{s_j}(\vec{x})
\conj{Z_{s_j}(\vec{y})}
}
=
\ip{\vec{x}}{\vec{y}}.
\]
Independence across $j$ then gives
\[
\Expt{Z}
=
\prod_{j=1}^{p}
\ip{\vec{x}}{\vec{y}}
=
\ip{\vec{x}}{\vec{y}}^{p}.
\]

\medskip
\noindent
\textbf{Second moments.}
Since the sign functions are independent across $j$,
\begin{align*}
    \Expt{\modulusSq{Z}}
&=
\prod_{j=1}^{p}
\Expt{\modulusSqnb{Z_{s_j}(\vec{x})
\conj{Z_{s_j}(\vec{y})}}
},\\
\Expt{Z^2}
&=
\prod_{j=1}^{p}
\Expt{
\big(
Z_{s_j}(\vec{x})
\conj{Z_{s_j}(\vec{y})}
\big)^2
}.
\end{align*}

Expanding one factor and using the fourth–moment structure of fourth roots-of-unity random functions
(Lemma~\ref{lem:sign_fun_4th_root}),
only index configurations that form pairs survive. 
The detailed calculation yields moments as follows
\[
\Expt{
\modulusSqnb{Z_{s_j}(\vec{x})
\conj{Z_{s_j}(\vec{y})}}
}
=
\ip{\vec{x}}{\vec{y}}^2
+
\normSq{\vec{x}} \normSq{\vec{y}}
-
\sum_{i=1}^{d} x_i^2 y_i^2,
\]
and
\[
\Expt{
\big(
Z_{s_j}(\vec{x})
\conj{Z_{s_j}(\vec{y})}
\big)^2
}
=
2\ip{\vec{x}}{\vec{y}}^2
-
\sum_{i=1}^{d} x_i^2 y_i^2.
\]

Therefore, taking $p$-times product of the terms give moment bounds for $Z$
\begin{align*}
\Exptnb{\modulusSqnb{Z}}
&=
\Big(
\ip{\vec{x}}{\vec{y}}^2
+
\normSq{\vec{x}} \normSq{\vec{y}}
-
\sum_{i=1}^{d} x_i^2 y_i^2
\Big)^p,\\
\Expt{Z^2}
&=
\Big(
2\ip{\vec{x}}{\vec{y}}^2
-
\sum_{i=1}^{d} x_i^2 y_i^2
\Big)^p.
\end{align*}
\noindent
\textbf{Variance bound.}
Using Equation~\ref{eq:variance of real part of estimate}, and applying the Cauchy-Schwarz inequality to upper bound 
$\ip{\vec{x}}{\vec{y}}^2 \le \normSq{\vec{x}} \normSq{\vec{y}}$, together with the fact that 
$\sum_i x_i^2 y_i^2 \ge 0$, we obtain
\[
\Var{Z}
\le
2^{p}\,
\norm{\vec{x}}^{2p}
\norm{\vec{y}}^{2p}.
\]
\end{proof}
\section{Experiments}\label{Experiments_sec}
\noindent \textbf{Experimental setup.}
\begin{figure*}[t]
    \centering
    \includegraphics[width=0.95\textwidth]{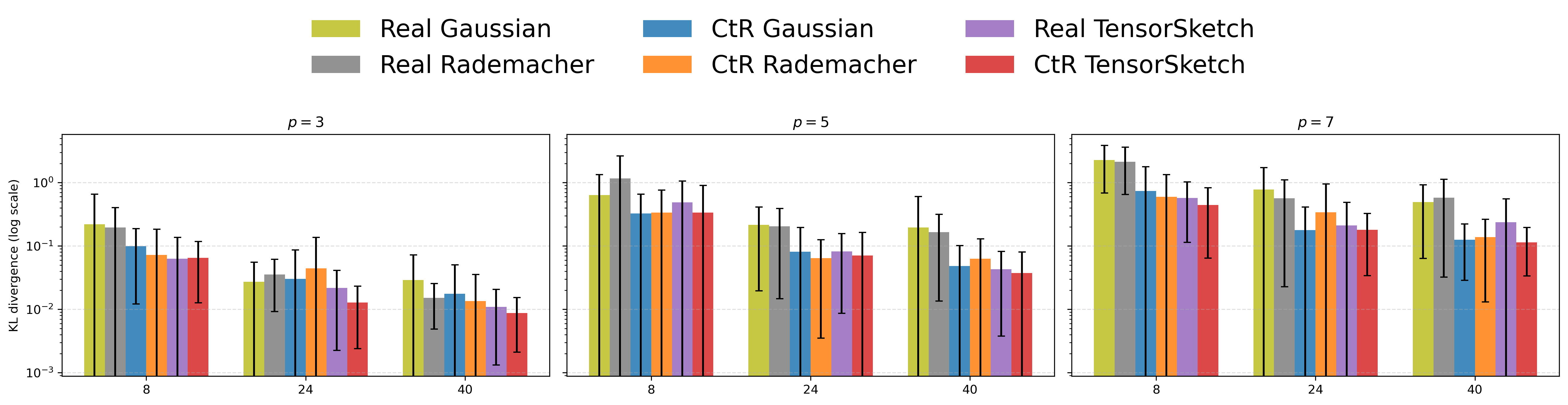}
    \caption{
     \textbf{KL divergence on the COD--RNA dataset.}
    We report KL divergence between the exact degree-$p$ polynomial kernel and the kernel
    reconstructed from sketch features for $p \in \{3,5,7\}$.
    Methods include Real and CtR (\emph{complex-to-real}) Gaussian and Rademacher JL sketches,
    as well as Real and CtR TensorSketch.
    Results are averaged over 20 independent trials.
    Sketch dimension is varied as $D \in \{d,3d,5d\}$.
    }
    \label{fig:codrna_kl_sketch}
\end{figure*}
\begin{figure*}[t]
    \centering
    \includegraphics[width=0.95\textwidth]{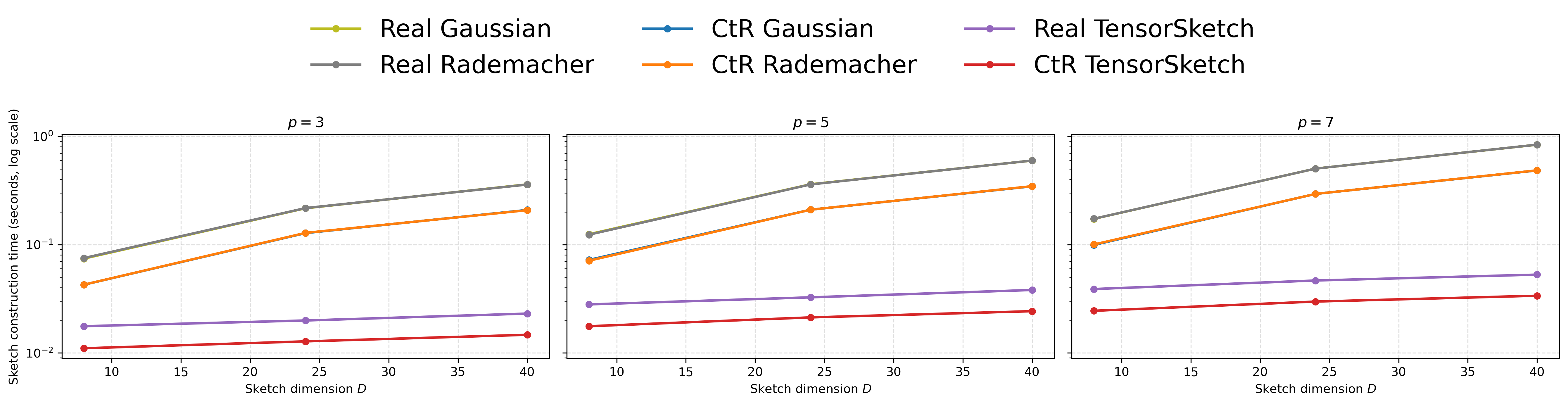}
    \caption{
     \textbf{Wall-clock sketch construction time on the COD--RNA dataset.}
    We compare Real and CtR (\emph{complex-to-real}) Gaussian and Rademacher JL sketches, with Real and CtR TensorSketch for $p \in \{3,7,10\}$.
    The sketch dimension is varied as $D \in \{d,3d,5d\}$.
    Each point reports the average sketch construction time over $20$ independent trials,
    measured on identical normalized input data.
    Here as well all dense
    JL-type sketches exhibit nearly overlapping construction times across
    sketch dimensions $D$ and polynomial degrees $p$.
    }
    \label{fig:codrna_time_sketch}
\end{figure*}
We compare our proposed CtR \texttt{TensorSketch} against standard \texttt{TensorSketch}~\citep{pham2013fast}. Along with this we included JL-type polynomial sketches (Gaussian and Rademacher~\citep{kar2012random}) together with their CtR variants~\citep{pmlr-v206-wacker23a} for comparison. All methods are implemented by us directly from the original algorithmic descriptions given in the previous sections.

All experiments were run on Ubuntu~22.04.4 using an
Intel\textsuperscript{\textregistered} Core\texttrademark{} i9-14900K processor
(24 cores, 32 threads) with 32\,GB RAM. Reported runtimes are wall-clock times for
sketch construction only.\\

\textbf{Baselines.}
The sketching methods compared in this section are listed below, along with brief descriptions.
\begin{itemize}

\item \textbf{Real Gaussian/Rademacher Sketch~\citep{kar2012random}:}
Dense JL-type polynomial sketches using i.i.d.\ Gaussian or Rademacher
projection matrices.

\item \textbf{Real \texttt{TensorSketch}~\citep{pham2013fast,pham2025tensorsketchfastscalable}:}
Standard CountSketch-based \texttt{TensorSketch} for polynomial kernels with
real random sign functions.

\item \textbf{CtR Gaussian/Rademacher Sketch~\citep{pmlr-v206-wacker23a}:}
\emph{complex-to-real} JL-type sketches using complex Gaussian or complex Rademacher
projections followed by CtR conversion, as proposed by~\citep{pmlr-v206-wacker23a}.

\item \textbf{CtR \texttt{TensorSketch} (Our Proposal):}
Our CtR \texttt{TensorSketch} obtained by replacing real random signs with
random function drawn uniformly from the four fourth roots of unity and applying
\emph{complex-to-real} conversion within the corresponding constructions.

\end{itemize}

\noindent \textbf{Datasets.}
We use both synthetic and real-world datasets.\\
\emph{Synthetic data.}
For the synthetic experiments, we generate $n=3000$ random vectors in
$\mathbb{R}^{d}$ with $d=2$.
Each coordinate is drawn independently from a standard Gaussian distribution,
and all vectors are $\ell_2$--normalized.
We consider polynomial kernels of degrees
$p \in \{10,15,20\}$ and vary the sketch dimension as
$D \in \{100,300,500\}$.

\emph{Real--world data.}
We evaluate all methods on two real-world datasets: 
MAGIC Gamma Telescope~\citep{magic_gamma_telescope_159} ($d=10$ real-valued features) 
and COD-RNA~\citep{articlecodrna} ($d=8$ numerical attributes). 
All inputs are treated as real-valued vectors and $\ell_2$-normalized prior to sketching. 
For each dataset, we subsample up to $n=3000$ points (or use the full dataset if smaller). 
Each configuration (dataset, degree $p$, sketch dimension $D$) is averaged over $20$ independent random trials.

\textbf{Comparison Metrics.}
We evaluate sketch quality and efficiency using two metrics.
\begin{itemize}
    \item \emph{KL divergence.}
    To assess how well a sketch preserves the structure of the degree-$p$
    polynomial kernel matrix, we compute the KL divergence between the exact
    kernel $K$ and its sketch-based approximation $\widehat{K}$.
    Since polynomial kernels are nonnegative, we normalize both matrices into
    discrete distributions
    \[
    P_{ij} = \frac{K_{ij}}{\sum_{a,b} K_{ab}},
    \qquad
    Q_{ij} = \frac{\widehat{K}_{ij}}{\sum_{a,b} \widehat{K}_{ab}},
    \]
    and report
    \[
    \mathrm{KL}(K \,\|\, \widehat{K})
    = \sum_{i,j} P_{ij}
    \log\!\left(\frac{P_{ij}}{Q_{ij}+\varepsilon}\right),
    \]
    where $\varepsilon>0$ ensures numerical stability. Lower values indicate
    better kernel preservation.

    \item \emph{Wall--clock time.}
    We measure efficiency by the wall--clock time required to construct sketch
    features for a given method and sketch dimension $D$, averaged over multiple
    independent trials.
\end{itemize}

\medskip 

\noindent \textbf{Results.}
Figures~\ref{fig:codrna_kl_sketch} and~\ref{fig:codrna_time_sketch} compare approximation error (KL divergence) and sketch construction time on the COD--RNA dataset across polynomial degrees and sketch dimensions. We evaluate the proposed CtR \texttt{TensorSketch} against the standard \texttt{TensorSketch}, along with real and CtR Gaussian and Rademacher JL-type sketches.

CtR \texttt{TensorSketch} consistently achieves the lowest KL divergence across all degrees and sketch dimensions, indicating more accurate kernel approximation than real \texttt{TensorSketch} and dense JL-type methods. This behavior matches our theoretical guarantee of  improved variance  established earlier.

In terms of runtime, CtR \texttt{TensorSketch} retains the input-sparsity cost $O(p(\nnz{\vec{x}} + D \log D))$, comparable to standard \texttt{TensorSketch}. In contrast, JL-type methods incur higher dense projection cost $O(p D d)$.

Additional experimental results are provided in Section~\ref{Further_exp} of the Appendix.

\section{Conclusion}\label{sec:conclusion}

We introduced a \emph{complex-to-real (CtR)} variant of \texttt{TensorSketch} that provides a compression algorithm for high-dimensional polynomial-kernel and tensor datasets.
 To the best of our knowledge, CtR constructions were only known for dense JL-type \texttt{TensorSketch}~\citep{pham2013fast} due to~\cite{pmlr-v206-wacker23a}.
Our results demonstrate that Complex-to-Real (CtR) variants achieve improved variance bounds compared to their real counterparts~\citep{pham2013fast}, matching those obtained in~\cite{pmlr-v206-wacker23a}. Moreover, these variants preserve the input-sparsity running time of the original method, making it a preferred choice over ~\cite{pmlr-v206-wacker23a}, for high-dimensional sparse data.

Two directions remain open for further investigation. First, it is unclear whether employing higher-order roots of unity can yield stronger higher-moment concentration, leading to tighter
 $(\epsilon, \delta)$ -approximation guarantees for the sketch. Second, it remains to be explored whether the advantages of Complex-to-Real (CtR) constructions can be extended to other kernel families and randomized feature maps beyond polynomial kernels. Together, these questions point to a broader potential for complex-valued sketch design.

\subsubsection*{Acknowledgements}

This research was partially supported by the ANRF under Project No. ANRF/ECRG/2024/001063/ENS. We gratefully acknowledge ANRF for this support.

\bibliography{uai2026-template}

\newpage

\onecolumn

\wideappendixtrue 

\title{Improving TensorSketch  Using Complex Random Variables\\(Supplementary Material)}
\maketitle


\appendix
\section{Proofs of Theorems in Section~\ref{sec:ctr_sketches}}

\fourththroot*
\begin{proof}
Since $s(i)$ is uniformly distributed over $\{1, \omega, \omega^{2}, \omega^{3}\} = \{1, i, -1, -i\}$, we have
\begin{align*}
    \Expt{s(i)}
    = \frac{1 + i - 1 - i}{4}
    = 0.
\end{align*}
All four values have unit magnitude, and therefore
\begin{align*}
    \Exptnb{\modulusSq{s(i)}}
    = \frac{\modulusSq{1} + \modulusSq{i} + \modulusSq{-1} + \modulusSq{-i}}{4}
    = 1.
\end{align*}
For the second complex moment,
\begin{align*}
    \Expt{s(i)^{2}}
    = \frac{1^{2} + i^{2} + (-1)^{2} + (-i)^{2}}{4}
    = \frac{1 - 1 + 1 - 1}{4}
    = 0.
\end{align*}
Finally, for any $i \neq j$, independence of $s(i)$ and $s(j)$ implies
\begin{align*}
    \Exptnb{s(i)\,\conj{s(j)}}
    = \Exptnb{s(i)}\,\Expt{\conj{s(j)}}
    = 0.
\end{align*}
\end{proof}

\subsection{Unbiasedness and Variance Analysis of CtR \texttt{TensorSketch}}\label{CtR_tensorsketch_appendix}
We establish the theoretical guarantees of CtR \texttt{TensorSketch}. 
Specifically, we show that the estimator is unbiased for the inner product estimation and subsequently provide a variance analysis.

\ctrTensorSketch*
\begin{proof}
According to Definition~\ref{CtR_Framework}, we have 
$\kCtRxy := \Re{\kCxy}$, and therefore we first analyze $\kCxy$. 
Recall that $\kCxy := \phiCx \overline{\phiCy}^{\top}$, where 
$\phiCx := \matrix{C}\vec{x}^{\otimes p}$ and 
$\phiCy := \matrix{C}\vec{y}^{\otimes p}$ according to 
Definition~\ref{def:ctr-tensorsketch}. 
In particular, $\matrix{C}\vec{x}^{\otimes p}, 
\matrix{C}\vec{y}^{\otimes p} \in \mathbb{C}^{D/2}$ denote the complex 
\texttt{CountSketch} of the vectors 
$\vec{x}^{\otimes p}, \vec{y}^{\otimes p} \in \mathbb{R}^{d^{p}}$, 
constructed using the derived functions 
$H : [d]^p \mapsto [D/2]$ and 
$S : [d]^p \to \{1,\omega,\omega^{2},\omega^{3}\}$ defined as
\begin{align*}
H(i_1,\ldots,i_p) &= \left( \sum_{j=1}^{p} h_j(i_j) \right) \bmod D/2, \\
S(i_1,\ldots,i_p) &= \prod_{j=1}^{p} s_j(i_j).
\end{align*}

\noindent In the proof, we denote $X := \vec{x}^{\otimes p}$ and $Y := \vec{y}^{\otimes p}$. 
Let $u,v \in [d]^p$ be the multi-indices corresponding to the entries of 
$X,Y \in \mathbb{R}^{d^p}$. For a multi-index $u=(j_1,\dots,j_p)$, the entry 
of $X$ is defined as $X_u = \prod_{k=1}^{p} x_{j_k}$, where the associated 
linearized index is $u = 1 + \sum_{k=1}^{p}(j_k - 1)\prod_{k'=1}^{p-k} d$. 
The entries of $Y$ are defined analogously. 

We begin by analyzing $\kCxy$ as defined below,
\begin{align*}
\kCxy &:= \phiCX \overline{\phiCY}^{\top} = \ip{\matrix{C} X}{\conj{\matrix{C}Y}}, \\
  &= \sum_{u,v \in [d]^p}
      X_{u}\, Y_{v}\, 
      S(u)\, \overline{S(v)}\ \indic{H(u) = H(v)}, \\
  &= \ \langle X, Y \rangle
   \!+\!
     \sum_{u \ne v}
      X_{u}\, Y_{v}\,
      S(u)\, \overline{S(v)}\, \indic{H(u) = H(v)}.
\end{align*}
Using Lemma~\ref{lem:sign_fun_4th_root}, we have 
$\Expt{S(u)\,\conj{S(v)}} = 0$ for all $u \neq v$. Hence,
\begin{align*}
\Expt{\kCxy}
= \ip{X}{Y} = \ip{\mathbf{x}^{\otimes p}}{\mathbf{y}^{\otimes p}}
= \ip{\vec{x}}{\vec{y}}^p .
\end{align*}
According to Definition~\ref{CtR_Framework}, we know $\kCtRxy := \Re{\kCxy}$, we have
\begin{align}
\Expt{\kCtRxy} = \Expt{\Re{\kCxy}} = \langle \vec{x}, \vec{y} \rangle^{p}. \label{appendix:eq:expectation-tensorsketch}
\end{align}
\noindent Using Equation~\ref{eq:ctr_variance_structure}, the variance of $\kCtRxy$ is
\begin{align} &\Var{\kCtRxy} = \frac{1}{2}\operatorname{Re}\!\Bigg\{ \Expt{ \modulusSq{\kCxy} } +  \Expt{ \left( \kCxy \right)^2 } - 2 \left( \Expt{ \kCxy } \right)^2 \Bigg\}. \label{appendix:eq:final variance ctr tensorsketch main} \end{align}

For the variance analysis, we need to compute 
$\Expt{\modulusSq{\kCxy}}$ and $\Expt{\left(\kCxy\right)^{2}}$.

\noindent Therefore by expanding 
$| \kCxy|^{2}$ we get:
\begin{align}
&| \kCxy|^{2}
:= |\langle \mathrm{C}\vec{x}^{\otimes p}, \overline{\mathrm{C}\vec{y}^{\otimes p}} \rangle| =\langle \mathrm{C}\vec{x}^{\otimes p}, \overline{\mathrm{C}\vec{y}^{\otimes p}} \rangle\langle \overline{\mathrm{C}\vec{x}^{\otimes p}}, \mathrm{C}\vec{y}^{\otimes p} \rangle, \\
=& \left(
    \langle X, Y \rangle
    + \sum_{u \ne v}
        X_{u} Y_{v} 
        S(u) \overline{S(v)}
        \indic{H(u) = H(v)}
  \right)\!\! \left(
    \langle X, Y \rangle
    + \sum_{u \ne v}
        X_{u} Y_{v} 
        \overline{S(u)} S(v)
        \indic{H(u) = H(v)}
  \right),\\
=& \langle X, Y \rangle^{2}
+  \langle X, Y \rangle 
    \left(\sum_{u \ne v}
        X_{u} Y_{v} 
        S(u) \overline{S(v)} 
        \indic{H(u) = H(v)} + \sum_{u \ne v}
        X_{u} Y_{v} 
        \overline{S(u)} S(v) 
        \indic{H(u) = H(v)}\right) \cdots \notag\\
&\qquad+
\left|\left(
\sum_{u \ne v}
    X_{u} Y_{v} 
    S(u) \overline{S(v)} 
    \indic{H(u) = H(v)}
\right)\right|^{2}.
\end{align}

Therefore, by applying expectation
\begin{align}
&\mathbb{E}\!\left[
\left|\left(
\sum_{u \ne v}
    X_{u} Y_{v} 
    S(u) \overline{S(v)} 
    \indic{H(u) = H(v)}
\right)\right|^{2}
\right] \notag
\\
&=
\mathbb{E}\!\left[
\sum_{\substack{u_1 \ne v_1 \\ u_2 \ne v_2}}
    X_{u_1} Y_{v_1}
    X_{u_2} Y_{v_2}
    S(u_1) \overline{S(v_1)}
    \overline{S(u_2)} S(v_2)
    \indic{H(u_{1}) = H(v_{1})}
    \indic{H(u_{2}) = H(v_{2})}
\right], \\
&=
\sum_{\substack{u_1 \ne v_1 \\ u_2 \ne v_2}}
\mathbb{E}\!\left[
    X_{u_1} Y_{v_1}
    X_{u_2} Y_{v_2}
    S(u_1) \overline{S(v_1)}
    \overline{S(u_2)} S(v_2)
\right]
\cdot
\mathbb{E}[\indic{H(u_{1}) = H(v_{1})}
    \indic{H(u_{2}) = H(v_{2})}], \\
&\le
\frac{2}{D}
\sum_{\substack{u_1 \ne v_1 \\ u_2 \ne v_2}}
\mathbb{E}\!\left[
    X_{u_1} Y_{v_1}
    X_{u_2} Y_{v_2}
    S(u_1) \overline{S(v_1)}
    \overline{S(u_2)} S(v_2)
\right], \\
&\le
\frac{2}{D}
\sum_{u_1 \ne v_1}
\mathbb{E}\!\left[
    |X_{u_1}|\, |Y_{v_1}|\,
    |X_{u_2}|\, |Y_{v_2}|\,
    S(u_1) \overline{S(v_1)}
    \overline{S(u_2)} S(v_2)
\right], \\
&=
\frac{2}{D}
\mathbb{E}\!\left[
\left|\left(
\sum_{u \ne v \in [d]^p}
    |X_{u}|\, |Y_{v}|\,
    S(u) \overline{S(v)}
\right)\right|^{2}
\right].\label{appendix:eq:second moment norm ctr}
\end{align}

We now bound the above expression using the second-moment bound from Lemma~\ref{lem:complex_to_real AMS}, given in Equation~\eqref{appendix:eq:modulo second moment x^p}. For completeness, we first restate the bound,
\begin{align}
\mathbb{E}\left[\left|\left(
\sum_{u , v \in [d]^p}
    |X_{u}|\, |Y_{v}|\,
    S(u) \overline{S(v)}
\right)\right|^{2}\right]
= \left(\langle \vec{x}, \vec{y} \rangle^{2} + \|\vec{x}\|_2^{2}\, \|\vec{y}\|_2^{2}
- \sum_{i=1}^{d} x_i^{2} y_i^{2}\right)^{p}. \label{appendix:eq:exact bound tensor sketch z ctr}
\end{align}
Now, we expand the term $\left|\left(
\sum_{u , v \in [d]^p}
    |X_{u}|\, |Y_{v}|\,
    S(u) \overline{S(v)}
\right)\right|^{2}$
from the above  Equation~\eqref{appendix:eq:exact bound tensor sketch z ctr} as follows,
\begin{align}
    &\left|\left(
\sum_{u , v \in [d]^p}
    |X_{u}|\, |Y_{v}|\,
    S(u) \overline{S(v)}
\right)\right|^{2} = \left| \sum_{u  \in [d]^p}
    |X_{u}|\, |Y_{v}|  + \sum_{\substack{u, v \in [d]^p \\ u \neq v}}
    |X_{u}|\, |Y_{v}|\,
    S(u) \overline{S(v)}  \right|^{2},\\
&= \left(
\sum_{u \in [d]^p}
    |X_{u}|\, |Y_{u}|\, 
\;+\;
\sum_{\substack{u, v \in [d]^p \\ u \neq v}}
    |X_{u}|\, |Y_{v}|\,
    S(u) \overline{S(v)}  
\right) \overline{\left(
\sum_{w \in [d]^p}
    |X_{w}|\, |Y_{w}|\,
\;+\;
\sum_{\substack{w, z \in [d]^p \\ w \neq z}}
    |X_{w}|\, |Y_{z}|\,
    S(w) \overline{S(z)}  
\right)}, \\
&= \left(
\sum_{u \in [d]^p}
    |X_{u}|\, |Y_{u}|\, 
\;+\;
\sum_{\substack{u, v \in [d]^p \\ u \neq v}}
    |X_{u}|\, |Y_{v}|\,
    S(u) \overline{S(v)}  
\right)  \left(
\sum_{w \in [d]^p}
    |X_{w}|\, |Y_{w}|\,
\;+\;
\sum_{\substack{w, z \in [d]^p \\ w \neq z}}
    |X_{w}|\, |Y_{z}|\,
    \overline{S(w)} S(z)  
\right),\\
&=
\sum_{u, w \in [d]^p}
    |X_{u}|\, |Y_{u}|\,
    |X_{w}|\, |Y_{w}|\,  +
\sum_{u \in [d]^p}
\sum_{\substack{w, z \in [d]^p \\ w \neq z}}
    |X_{u}|\, |Y_{u}|\,
    |X_{w}|\, |Y_{z}|\,
    \overline{S(w)} S(z) + \cdots \notag
\\
&\cdots 
+
\sum_{\substack{u, v \in [d]^p \\ u \neq v}}
\sum_{w \in [d]^p}
    |X_{u}|\, |Y_{v}|\,
    |X_{w}|\, |Y_{w}|\,
    S(u) \overline{S(v)}\, 
+\sum_{\substack{u, v \in [d]^p \\ u \neq v}}
\sum_{\substack{w, z \in [d]^p \\ w \neq z}}
    |X_{u}|\, |Y_{v}|\,
    |X_{w}|\, |Y_{z}|\,
    S(u) \overline{S(v)}\,
    \overline{S(w)} S(z). \label{appendix:eq:second moment tensor sketch}
\end{align}
Using Lemma~\ref{lem:sign_fun_4th_root}, we have 
$\Expt{S(u)\,\conj{S(v)}} = 0$ for all $u \neq v$. Hence,
\begin{align}
    \sum_{\substack{u, v \in [d]^p \\ u \neq v}}
    \sum_{w \in [d]^p}
        |X_{u}|\, |Y_{v}|\,
        |X_{w}|\, |Y_{w}|\,
        \mathbb{E}\!\left[S(u) \overline{S(v)}\right]
        = 0,
\end{align}

\item Similarly, for all $w \neq z$,
\begin{align}
    \sum_{u \in [d]^p}
    \sum_{\substack{w, z \in [d]^p \\ w \neq z}}
        |X_{u}|\, |Y_{u}|\,
        |X_{w}|\, |Y_{z}|\,
        \mathbb{E}\!\left[\overline{S(w)} S(z)\right]
        = 0,
\end{align}

Substitute this in Equation~\eqref{appendix:eq:exact bound tensor sketch z ctr}, we get
\begin{align}
     \mathbb{E} \left[ \left|
\sum_{u , v \in [d]^p}
    |X_{u}|\, |Y_{v}|\,
    S(u) \overline{S(v)}
\right|^{2} \right] &=  \sum_{u, w \in [d]^p}
    |X_{u}|\, |Y_{u}|\,
    |X_{w}|\, |Y_{w}|\ + \cdots  \notag\\& \cdots +\mathbb{E}\left[\sum_{\substack{u, v \in [d]^p \\ u \neq v}}
\sum_{\substack{w, z \in [d]^p \\ w \neq z}}
    |X_{u}|\, |Y_{v}|\,
    |X_{w}|\, |Y_{z}|\,
    S(u) \overline{S(v)}\,
    \overline{S(w)} S(z)\right],\\
    &=  \left\langle X,Y\right\rangle^{2} +\mathbb{E}  \left|\left(
\sum_{u \neq v }
    |X_{u}|\, |Y_{v}|\,
    S(u) \overline{S(v)}
\right)\right|^{2}.
\end{align}
We conclude that,
 \begin{align}
\mathbb{E} \left[ \left|
\sum_{u \neq v }
    |X_{u}|\, |Y_{v}|\,
    S(u) \overline{S(v)}
\right|^{2} \right]      &=  \mathbb{E}  \left[\left|
\sum_{u , v \in [d]^p}
    |X_{u}|\, |Y_{v}|\,
    S(u) \overline{S(v)}
\right|^{2}\right]  - \left\langle \mathbf{x,y}\right\rangle^{2p}.
\end{align}
\noindent Putting value of $\mathbb{E}\left[\left|\left(
\sum_{u , v \in [d]^p}
    |X_{u}|\, |Y_{v}|\,
    S(u) \overline{S(v)}
\right)\right|^{2}\right]$, we get  
\begin{align}
\mathbb{E} \left[ \left|
\sum_{u \neq v }
    |X_{u}|\, |Y_{v}|\,
    S(u) \overline{S(v)}
\right|^{2}\right]      &=  \left(\langle \vec{x}, \vec{y} \rangle^{2} + \|\vec{x}\|_2^{2}\, \|\vec{y}\|_2^{2}
- \sum_{i=1}^{d} x_i^{2} y_i^{2}\right)^{p}  - \left\langle \mathbf{x,y}\right\rangle^{2p}. \label{appendix:eq:exact bound for tensor sketch non diagonal terms}
\end{align}
Put this value in Equation~\ref{appendix:eq:second moment norm ctr},
\begin{align}
    \mathbb{E}\!\left[
\left|
\sum_{u \ne v}
    X_{u} Y_{v} 
    S(u) \overline{S(v)} 
    \indic{H(u) = H(v)}
\right|^{2}
\right] \le \frac{2}{D}\left(\left(\langle \vec{x}, \vec{y} \rangle^{2} + \|\vec{x}\|_2^{2}\, \|\vec{y}\|_2^{2}
- \sum_{i=1}^{d} x_i^{2} y_i^{2}\right)^{p}  - \left\langle \mathbf{x,y}\right\rangle^{2p}\right).
\end{align}
Now we have the second moment as follows, 
\begin{align}
  \mathbb{E}\left[  | \kCxy|^{2}\right] &= \left\langle X,Y \right\rangle^{2} +    \mathbb{E}\!\left[
\left|\left(
\sum_{u \ne v}
    X_{u} Y_{v} 
    S(u) \overline{S(v)} 
    \indic{H(u) = H(v)}
\right)\right|^{2}
\right],\\
&\le \left\langle \vec{x},\vec{y} \right\rangle^{2p} + \frac{2}{D}\left(\left(\langle \vec{x}, \vec{y} \rangle^{2} + \|\vec{x}\|_2^{2}\, \|\vec{y}\|_2^{2}
- \sum_{i=1}^{d} x_i^{2} y_i^{2}\right)^{p}  - \left\langle \mathbf{x,y}\right\rangle^{2p}\right).\label{appendix:second moment norm ctr}
\end{align}

\noindent Next, we analyze $\mathbb{E}\!\left[(\kCxy)^{2}\right]$. Expanding $(\kCxy)^{2}$ gives:
\begin{align}
&( \kCxy)^{2}
:= \left(\langle \mathrm{C}\vec{x}^{\otimes p}, \overline{\mathrm{C}\vec{y}^{\otimes p}} \rangle\right)^{2} = \langle \mathrm{C}\vec{x}^{\otimes p}, \overline{\mathrm{C}\vec{y}^{\otimes p}} \rangle \langle \mathrm{C}\vec{x}^{\otimes p}, \overline{\mathrm{C}\vec{y}^{\otimes p}} \rangle,\\
&= \left(
    \langle X, Y \rangle
    + \sum_{u \ne v}
        X_{u} Y_{v} 
        S(u) \overline{S(v)}
        \indic{H(u) = H(v)}
  \right)\left(
    \langle X, Y \rangle
    + \sum_{u \ne v}
        X_{u} Y_{v} 
        S(u) \overline{S(v)}
        \indic{H(u) = H(v)}
  \right),\\
&= \langle X, Y \rangle^{2}
+  2\langle X, Y \rangle 
    \left(\sum_{u \ne v}
        X_{u} Y_{v} 
        S(u) \overline{S(v)} 
        \indic{H(u) = H(v)}\right) +\left(
\sum_{u \ne v}
    X_{u} Y_{v} 
    S(u) \overline{S(v)} 
    \indic{H(u) = H(v)}
\right)^{2},
\end{align}

Therefore, by applying expectation
\begin{align}
&\mathbb{E}\!\left[
\left(
\sum_{u \ne v}
    X_{u} Y_{v} 
    S(u) \overline{S(v)} 
    \indic{H(u) = H(v)}
\right)^{2}
\right]\\
&=
\mathbb{E}\!\left[
\sum_{\substack{u_1 \ne v_1 \\ u_2 \ne v_2}}
    X_{u_1} Y_{v_1}
    X_{u_2} Y_{v_2}
    S(u_1) \overline{S(v_1)}
    S(u_2) \overline{S(v_2)}
    \indic{H(u_{1}) = H(v_{1})}
    \indic{H(u_{2}) = H(v_{2})}
\right], \\
&=
\sum_{\substack{u_1 \ne v_1 \\ u_2 \ne v_2}}
\mathbb{E}\!\left[
    X_{u_1} Y_{v_1}
    X_{u_2} Y_{v_2}
    S(u_1) \overline{S(v_1)}
    S(u_2) \overline{S(v_2)}
\right]
\cdot
\mathbb{E}[\indic{H(u_{1}) = H(v_{1})}
    \indic{H(u_{2}) = H(v_{2})}], \\
&\leq
\frac{2}{D}
\mathbb{E}\!\left[
\left(
\sum_{u \ne v \in [d]^p}
    X_{u}\, Y_{v}\,
    S(u) \overline{S(v)}
\right)^{2}
\right].
\end{align}

\noindent By similar analysis and using second-moment bound of Lemma~\ref{lem:complex_to_real AMS} given in Equation~\eqref{appendix:eq:second moment x^p}, we get
\begin{align}
    \mathbb{E}\!\left[
\left(
\sum_{u \ne v}
    X_{u} Y_{v} 
    S(u) \overline{S(v)} 
    \indic{H(u) = H(v)}
\right)^{2}
\right] \le \frac{2}{D}\left(\left(2\langle \vec{x}, \vec{y} \rangle^{2} - \sum_{i=1}^{d} x_i^{2} y_i^{2}\right)^{p}  - \left\langle \mathbf{x,y}\right\rangle^{2p}\right).
\end{align}
Therefore, we have,
\begin{align}
  \mathbb{E}\left[  \left( \kCxy\right)^{2}\right] &= \left\langle X,Y \right\rangle^{2} +    \mathbb{E}\!\left[
\left(
\sum_{u \ne v}
    X_{u} Y_{v} 
    S(u) \overline{S(v)} 
    \indic{H(u) = H(v)}
\right)^{2}
\right],\\
&\le \left\langle \vec{x},\vec{y} \right\rangle^{2p} + \frac{2}{D}\left(\left(2\langle \vec{x}, \vec{y} \rangle^{2} - \sum_{i=1}^{d} x_i^{2} y_i^{2}\right)^{p}  - \left\langle \mathbf{x,y}\right\rangle^{2p}\right).\label{appendix:eq:second moment ctr}
\end{align}

\noindent Therefore for final variance we put Equation~\eqref{appendix:second moment norm ctr} and Equation~\eqref{appendix:eq:second moment ctr} in Equation~\eqref{appendix:eq:final variance ctr tensorsketch main},we get
\begin{align}
    \Var{\kCtRxy} 
    &\le \frac{1}{2}\Bigg[ \langle \vec{x},\vec{y} \rangle^{2p} + 
        \frac{2}{D}\Big( \big(\langle \vec{x}, \vec{y} \rangle^{2} + \|\vec{x}\|_2^{2}\, \|\vec{y}\|_2^{2}
        - \sum_{i=1}^{d} x_i^{2} y_i^{2}\big)^{p}  - \langle \vec{x},\vec{y} \rangle^{2p} \Big) \nonumber \\
    &\qquad\quad
        + \langle \vec{x},\vec{y} \rangle^{2p} +\frac{2}{D}\Big( \big(2\langle \vec{x}, \vec{y} \rangle^{2} - \sum_{i=1}^{d} x_i^{2} y_i^{2}\big)^{p}  
        - \langle \vec{x},\vec{y} \rangle^{2p} \Big) -2\langle \vec{x},\vec{y} \rangle^{2p}
    \Bigg],\\
    \intertext{we can upper bound the above equation by using inequality $\langle\mathbf{x,y}\rangle^{2} \leq \|\mathbf{x}\|^{2}_{2}\|\mathbf{y}\|^{2}_{2}$, then}\notag\\
    &\leq \frac{1}{2}\left[\frac{2}{D}\left(\left( 2\|\vec{x}\|_2^{2}\, \|\vec{y}\|_2^{2} \right)^{p}  - \|\vec{x}\|_2^{2p}\, \|\vec{y}\|_2^{2p}\right) + \frac{2}{D}\left(\left( 2\|\vec{x}\|_2^{2}\, \|\vec{y}\|_2^{2} \right)^{p}  - \|\vec{x}\|_2^{2p}\, \|\vec{y}\|_2^{2p}\right)\right],\\
    &\leq \frac{1}{2}\left[\frac{2^{p+1} - 2}{D}\|\vec{x}\|_{2}^{2p}\,\|\vec{y}\|_{2}^{2p} 
    + \frac{2^{p+1}- 2}{D}\|\vec{x}\|_{2}^{2p}\,\|\vec{y}\|_{2}^{2p}\right],\\
    &\leq \frac{2^{p+1} - 2}{D}\|\vec{x}\|_{2}^{2p}\,\|\vec{y}\|_{2}^{2p}.
\end{align}
\end{proof}
\noindent We begin by stating a lemma that serves as a key step to bound the second moments of the estimator in the proof of Theorem~\ref{thm:complex_countsketch_degree_p} .
\AMSSketch*
\begin{proof} First, we consider the expectation. For each $j$, we note that
\begin{align}
\mathbb{E}\!\left[ Z_{s_j}(\vec{x})\, \overline{Z_{s_j}(\vec{y})} \right]
&= 
\mathbb{E}\!\left[
\left( \sum_{i=1}^{d} x_i\, s_j(i) \right)
\left( \sum_{k=1}^{d} y_k\, \overline{s_j(k)} \right)
\right], \\
&= \sum_{i=1}^{d}\sum_{k=1}^{d} x_i y_k\, \mathbb{E}[s_j(i)\overline{s_j(k)}], \\
&= \sum_{i=1}^{d} x_i y_i\, \mathbb{E}[|s_j(i)|^2] +  \sum_{i\neq k} x_i y_k\, \mathbb{E}[s_j(i)\overline{s_j(k)}], \\ 
&= \langle \vec{x}, \vec{y} \rangle,
\end{align}
where, $\mathbb{E}[s_j(i)\overline{s_j}(k)] = 0 , \forall i\neq k$ and $\mathbb{E}[|s_j(i)|^2] = 1,  \forall\  i\in [d]$ as given in Lemma~\ref{lem:sign_fun_4th_root}.

\noindent Since the functions $s_j$ are independent across different $j$, we have
\begin{align}
\mathbb{E}[Z] 
= \prod_{j=1}^{p} \mathbb{E}[Z_{s_j}(\vec{x}) \overline{Z_{s_j}(\vec{y})}]
= \langle \vec{x}, \vec{y} \rangle^{p}.
\end{align}

\noindent Next, to bound the variance,
\begin{align}
\Var{Z} = \frac{1}{2}\operatorname{Re}\!\big\{
\mathbb{E}[|Z|^2] + \mathbb{E}[(Z)^2] - 2|\mathbb{E}[Z]|^2
\big\}.
\end{align}

\noindent Because the random hash functions $s_j$ are independent across different $j$, we may write
\begin{align}
\mathbb{E}[|Z|^{2}]
= \prod_{j=1}^{p} \mathbb{E}\!\left[ |\left( Z_{s_j}(\vec{x})\, \overline{Z_{s_j}(\vec{y})} \right)|^{2} \right] \text{ and } \mathbb{E}[(Z)^{2}]
= \prod_{j=1}^{p} \mathbb{E}\!\left[ \left( Z_{s_j}(\vec{x})\, \overline{Z_{s_j}(\vec{y})} \right)^{2} \right] \label{ams norm and square}
\end{align}

\noindent For each $j$, expanding the square gives
\begin{align}
\mathbb{E}\!\left[ |\left( Z_{s_j}(\vec{x})\, \overline{Z_{s_j}(\vec{y})} \right)|^{2} \right]
&=
\mathbb{E}\!\left[
\left( \sum_{i=1}^{d} x_i\, s_j(i) \right)
\left( \sum_{k=1}^{d} y_k\, \overline{s_j(k)} \right)\left( \sum_{i=1}^{d} x_i\, \overline{s_j(i)} \right)
\left( \sum_{k=1}^{d} y_k\, s_j(k) \right)
\right] ,\\
&=
\sum_{i=1}^{d} \sum_{i'=1}^{d} \sum_{k=1}^{d} \sum_{k'=1}^{d}
x_i\, x_{i'}\, y_k\, y_{k'}\;
\mathbb{E}\!\left[ s_j(i)\overline{s_j(k)}\overline{s_j(i')}s_j(k') \right].
\end{align}

\noindent Observing that 
$\mathbb{E}[s_j(i)\overline{s_j(k)}\overline{s_j(i')}s_j(k')]$ 
is nonzero only when the indices form pairs (including the possibility that all four are identical), we have
\begin{align}
\mathbb{E}[s_j(i)\overline{s_j(k)}\overline{s_j(i')}s_j(k')]
=
\begin{cases}
1, & \text{if } i = k = i' = k', \\[4pt]
1, & \text{if } i = k \ne i' = k', \\[4pt]
1, & \text{if } i = i' \ne k = k', \\[4pt]
0, & \text{otherwise}.
\end{cases}
\end{align}
\noindent The contribution from terms with $i = k = i' = k'$ is 
\begin{align}
\sum_{i=1}^{d} x_i^{2} y_i^{2}.
\end{align}
\noindent Terms with $i = k \ne i' = k'$ contribute
\begin{align}
\sum_{i \ne i'} x_i y_i\, x_{i'} y_{i'}
= \left( \sum_{i=1}^{d} x_i y_i \right)^{2}
 - \sum_{i=1}^{d} x_i^{2} y_i^{2}
= \langle \vec{x}, \vec{y} \rangle^{2} - \sum_{i=1}^{d} x_i^{2} y_i^{2}.
\end{align}
Finally, for $i = i' \ne k = k'$ we obtain
\begin{align}
\sum_{i \ne k} x_i^{2} y_k^{2}
= \|\vec{x}\|_2^{2}\, \|\vec{y}\|_2^{2}
 - \sum_{i=1}^{d} x_i^{2} y_i^{2}.
\end{align}
Thus, summing these contributions, we have
\begin{align}
\mathbb{E}\!\left[| \left( Z_{s_j}(\vec{x})\, Z_{s_j}(\vec{y}) \right) |^{2} \right]
&=
\sum_{i=1}^{d} x_i^{2} y_i^{2}
+ \left( \langle \vec{x}, \vec{y} \rangle^{2} + \|\vec{x}\|_2^{2}\, \|\vec{y}\|_2^{2}
 - 2\sum_{i=1}^{d} x_i^{2} y_i^{2} \right),
\\[4pt]
&=
 \langle \vec{x}, \vec{y} \rangle^{2} + \|\vec{x}\|_2^{2}\, \|\vec{y}\|_2^{2}
- \sum_{i=1}^{d} x_i^{2} y_i^{2}.
\end{align}
Substituting this bound into Equation~\eqref{ams norm and square} yields
\begin{align}    
\mathbb{E}[|Z|^{2}]
= \left(\langle \vec{x}, \vec{y} \rangle^{2} + \|\vec{x}\|_2^{2}\, \|\vec{y}\|_2^{2}
- \sum_{i=1}^{d} x_i^{2} y_i^{2}\right)^{p}, \label{appendix:eq:modulo second moment x^p}
\end{align}

\noindent For each $j$, expanding the square gives
\begin{align}
\mathbb{E}\!\left[ \left( Z_{s_j}(\vec{x})\, \overline{Z_{s_j}(\vec{y})} \right)^{2} \right]
&=
\mathbb{E}\!\left[
\left( \sum_{i=1}^{d} x_i\, s_j(i) \right)
\left( \sum_{k=1}^{d} y_k\, \overline{s_j(k)} \right)\left( \sum_{i=1}^{d} x_i\, s_j(i) \right)
\left( \sum_{k=1}^{d} y_k\, \overline{s_j(k)} \right)
\right] ,\\
&=
\sum_{i=1}^{d} \sum_{i'=1}^{d} \sum_{k=1}^{d} \sum_{k'=1}^{d}
x_i\, x_{i'}\, y_k\, y_{k'}\;
\mathbb{E}\!\left[ s_j(i)\overline{s_j(k)}s_j(i')\overline{s_j(k')} \right].
\end{align}

\noindent Observing that 
$\mathbb{E}[s_j(i)\overline{s_j(k)}s_j(i')\overline{s_j(k')}]$ 
is nonzero only when the indices form pairs (including the possibility that all four are identical), we have
\begin{align}
\mathbb{E}[s_j(i)\overline{s_j(k)}s_j(i')\overline{s_j(k')}]
=
\begin{cases}
1, & \text{if } i = k = i' = k', \\[4pt]
1, & \text{if } i = k \ne i' = k', \\[4pt]
1, & \text{if } i = k' \ne i' = k, \\[4pt]
0, & \text{otherwise}.
\end{cases}
\end{align}
\noindent The contribution from terms with $i = k = i' = k'$ is 
\begin{align}
\sum_{i=1}^{d} x_i^{2} y_i^{2}.
\end{align}
\noindent Terms with $i = k \ne i' = k'$ contribute
\begin{align}
\sum_{i \ne i'} x_i y_i\, x_{i'} y_{i'}
= \left( \sum_{i=1}^{d} x_i y_i \right)^{2}
 - \sum_{i=1}^{d} x_i^{2} y_i^{2}
= \langle \vec{x}, \vec{y} \rangle^{2} - \sum_{i=1}^{d} x_i^{2} y_i^{2}.
\end{align}
Finally, for $i = k' \ne i' = k$ we obtain
\begin{align}
\sum_{i \ne i'} x_i y_i\, x_{i'} y_{i'}
= \left( \sum_{i=1}^{d} x_i y_i \right)^{2}
 - \sum_{i=1}^{d} x_i^{2} y_i^{2}
= \langle \vec{x}, \vec{y} \rangle^{2} - \sum_{i=1}^{d} x_i^{2} y_i^{2}.
\end{align}
Thus, summing these contributions, we have
\begin{align}
\mathbb{E}\!\left[ \left( Z_{s_j}(\vec{x})\, Z_{s_j}(\vec{y}) \right)^{2} \right]
&=
\sum_{i=1}^{d} x_i^{2} y_i^{2}
+ \left( 2\langle \vec{x}, \vec{y} \rangle^{2}- 2\sum_{i=1}^{d} x_i^{2} y_i^{2} \right),
\\[4pt]
&=
 2\langle \vec{x}, \vec{y} \rangle^{2}
- \sum_{i=1}^{d} x_i^{2} y_i^{2}.
\end{align}
Substituting this bound into Equation~\eqref{ams norm and square} yields
\begin{align}    
\mathbb{E}[(Z)^{2}]
= \left(2\langle \vec{x}, \vec{y} \rangle^{2}
- \sum_{i=1}^{d} x_i^{2} y_i^{2} \right)^p, \label{appendix:eq:second moment x^p}
\end{align}

\noindent which completes the proof since
\begin{align}
\Var{Z}
&= \frac{1}{2}\operatorname{Re}\!\big\{
\mathbb{E}[|Z|^2] + \mathbb{E}[(Z)^2] - 2|\mathbb{E}[Z]|^2
\big\},\\
&= \frac{1}{2}\left\{\left(\langle \vec{x}, \vec{y} \rangle^{2} + \|\vec{x}\|_2^{2}\, \|\vec{y}\|_2^{2}
- \sum_{i=1}^{d} x_i^{2} y_i^{2}\right)^{p} +  \left(2\langle \vec{x}, \vec{y} \rangle^{2}
- \sum_{i=1}^{d} x_i^{2} y_i^{2} \right)^p - 2\langle \vec{x}, \vec{y} \rangle^{2p}\right\}\label{appendix:eq:variance_CtR_AMS}.
\end{align}
Using the Cauchy--Schwarz inequality, 
$\langle \vec{x}, \vec{y} \rangle^{2} \le \|\vec{x}\|_2^{2}\, \|\vec{y}\|_2^{2}$,  
and noting that 
$\sum_{i=1}^{d} x_i^{2} y_i^{2} \ge 0$,  
it follows that
\begin{align}
\Var{Z} \leq  2^{p} \|\vec{x}\|_2^{2p}\|\vec{y}\|_2^{2p}.
\end{align}
\medskip

\end{proof}

\section{proof of \texttt{CountSketch} Estimator with Fourth Roots of Unity}

\begin{thm}
\label{thm:cs-unbiasedness}
Let $\Phi$ denote the CountSketch with the hash function taking values in the fourth roots of unity.  
For any vectors $\mathbf{x}, \mathbf{y} \in \mathbb{R}^d$, the corresponding sketch vector is defined by
\begin{align}
\Phi(\mathbf{x})_j 
= \sum_{i=1}^{d} \delta_{ji}\sigma(i) x_i\,, \qquad \Phi(\mathbf{y})_j 
= \sum_{i=1}^{d} \delta_{ji}\sigma(i) y_i,
\qquad \forall j \in [D/2].
\end{align}
where,
\begin{itemize}
    \item $h : [d] \to [D/2]$ assigning each coordinate independently and uniformly to one of $D/2$ buckets, and  
    \item $\sigma: [d] \to \{1, \omega, \omega^{2}, \omega^{3}\}$ are drawn independently and uniformly at random,
\item  $\delta_{j i} = 
\begin{cases}
1, & \text{if } h(i) = j, \\
0, & \text{otherwise.}
\end{cases}$
\end{itemize}
then the estimator $\hat{k}(\mathbf{x}, \mathbf{y})
:= \Re{\Phi(\mathbf{x})^\top \overline{\Phi(\mathbf{y})}},$
satisfies
\begin{align}
\mathbb{E}\!\left[\hat{k}(\mathbf{x}, \mathbf{y})\right]
&= \langle \mathbf{x}, \mathbf{y} \rangle,\\
\operatorname{Var}\!\big(\hat{k}(\mathbf{x}, \mathbf{y})\big)
&= \frac{1}{D} \!\left(
        \|\mathbf{x}\|_2^2 \|\mathbf{y}\|_2^2
        + \langle \mathbf{x}, \mathbf{y} \rangle^2
        - 2 \sum_{i=1}^d x_i^2 y_i^2
    \right).
\end{align}
\end{thm}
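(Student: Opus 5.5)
The plan is to follow the CtR template of Definition~\ref{CtR_Framework} at degree $p=1$. First analyze the complex estimator $\widehat{k}_{\mathrm{C}} := \Phi(\mathbf{x})^\top\overline{\Phi(\mathbf{y})}$, then pass to its real part using Equation~\eqref{eq:ctr_variance_structure}. Expanding the bucket sums gives
\[
\widehat{k}_{\mathrm{C}} = \sum_{i,k} x_i y_k\,\sigma(i)\overline{\sigma(k)}\,\indic{h(i)=h(k)} = \ip{\mathbf{x}}{\mathbf{y}} + W,
\qquad
W := \sum_{i\neq k} x_i y_k\,\sigma(i)\overline{\sigma(k)}\,\indic{h(i)=h(k)},
\]
where the diagonal part uses $|\sigma(i)|^2=1$. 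Since $\sigma$ is independent of $h$, and Lemma~\ref{lem:sign_fun_4th_root} gives $\mathbb{E}[\sigma(i)\overline{\sigma(k)}]=0$ for $i\neq k$, we get $\mathbb{E}[W]=0$. Hence $\mathbb{E}[\widehat{k}_{\mathrm{C}}]=\ip{\mathbf{x}}{\mathbf{y}}$, which is real, so $\mathbb{E}[\hat{k}]=\ip{\mathbf{x}}{\mathbf{y}}$.

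For the variance, Equation~\eqref{eq:ctr_variance_structure} reduces everything to $\mathbb{E}|\widehat{k}_{\mathrm{C}}|^2$ and $\mathbb{E}[\widehat{k}_{\mathrm{C}}^2]$. Because $\mathbb{E}[W]=0$ and $\mathbb{E}[\overline{W}]=0$, these equal $\ip{\mathbf{x}}{\mathbf{y}}^2+\mathbb{E}|W|^2$ and $\ip{\mathbf{x}}{\mathbf{y}}^2+\mathbb{E}[W^2]$ respectively. The $\ip{\mathbf{x}}{\mathbf{y}}^2$ terms then cancel against $-2(\mathbb{E}\widehat{k}_{\mathrm{C}})^2$, leaving
\[
\operatorname{Var}(\hat{k}) = \tfrac12\,\mathrm{Re}\bigl\{\mathbb{E}|W|^2+\mathbb{E}[W^2]\bigr\}.
\]
Each of these is a double sum over pairs $(i,k)$ and $(i',k')$ with $i\neq k$ and $i'\neq k'$. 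By independence of $h$ and $\sigma$, each term factorizes into a four-fold sign moment times a hash collision probability.

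The main step, and the one where care is needed, is enumerating which four-fold sign moments survive. Only the following facts about fourth roots of unity are needed: $\mathbb{E}\sigma=0$, $\mathbb{E}\sigma^2=0$, and $|\sigma|=1$.
\begin{itemize}
\item For $\mathbb{E}|W|^2$ the moment is $\mathbb{E}[\sigma(i)\overline{\sigma(k)}\,\overline{\sigma(i')}\sigma(k')]$. Since $i\neq k$, it is nonzero only when $i=i'$ and $k=k'$, where it equals $1$. The pairing $i=k$ is excluded by the off-diagonal restriction. Any other coincidence leaves either an unpaired factor or a factor $\sigma^2$ or $\overline{\sigma}^2$, both of which have mean zero. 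With $\Pr[h(i)=h(k)]=2/D$ for $i\neq k$, this gives $\mathbb{E}|W|^2=\tfrac{2}{D}\sum_{i\neq k}x_i^2y_k^2=\tfrac2D\bigl(\|\mathbf{x}\|_2^2\|\mathbf{y}\|_2^2-\sum_i x_i^2y_i^2\bigr)$.
\item For $\mathbb{E}[W^2]$ the moment is $\mathbb{E}[\sigma(i)\sigma(i')\overline{\sigma(k)}\,\overline{\sigma(k')}]$. The pairing $i=i'$, $k=k'$ now yields $\mathbb{E}[\sigma(i)^2]\,\mathbb{E}[\overline{\sigma(k)}^2]=0$. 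This is precisely the cross term that vanishes for complex signs but not for real signs. The only surviving configuration is $i=k'$, $k=i'$ with $i\neq k$, where the moment equals $1$. This gives $\mathbb{E}[W^2]=\tfrac2D\sum_{i\neq k}x_iy_ix_ky_k=\tfrac2D\bigl(\ip{\mathbf{x}}{\mathbf{y}}^2-\sum_i x_i^2y_i^2\bigr)$, which is real.
\end{itemize}
One detail to check in both cases is that when the two index pairs coincide as sets, the collision indicator is the same event, so its probability is exactly $2/D$ and not $(2/D)^2$.

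Substituting both expressions gives
\[
\operatorname{Var}(\hat{k})=\tfrac12\cdot\tfrac2D\Bigl(\|\mathbf{x}\|_2^2\|\mathbf{y}\|_2^2+\ip{\mathbf{x}}{\mathbf{y}}^2-2\sum_i x_i^2y_i^2\Bigr),
\]
which is the claimed identity. Since $D/2$ buckets are used, the factor $\tfrac12$ exactly offsets the doubled collision probability. The result therefore coincides with the real \texttt{CountSketch} bound of Definition~\ref{count_sketch_def}, confirming that at degree $1$ the complex construction gives no gain.
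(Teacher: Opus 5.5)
Your proposal is correct and follows essentially the same route as the paper: expand $\mathbb{E}|\widehat{k}_{\mathrm{C}}|^2$ and $\mathbb{E}[\widehat{k}_{\mathrm{C}}^2]$, keep only the paired fourth-root sign moments (with $\mathbb{E}[\sigma^2]=0$ killing the $i=i'$, $k=k'$ pairing in the squared term), weight by the collision probability $2/D$, and substitute into the CtR variance identity. The only difference is bookkeeping: the paper sums explicitly over bucket pairs $(j,j')$, including the $j\neq j'$ terms that recombine into $\langle \mathbf{x},\mathbf{y}\rangle^2$, whereas you first remove the deterministic diagonal and analyze only the off-diagonal part $W$, which is slightly cleaner.
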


\begin{proof}
\noindent Since $\sigma(i)$ is uniformly distributed over $\{1, \omega, \omega^{2}, \omega^{3}\} = \{1, i, -1, -i\}$ and satisfies $\mathbb{E}[\sigma(i)] = 0, \,  \mathbb{E}\!\left[\,|\sigma(i)|^{2}\,\right] = 1, \,  \mathbb{E}\!\left[(\sigma(i))^{2}\right] =0$. Consider now the complex inner product of the sketch:
\begin{align}
\hat{k}_C(\mathbf{x}, \mathbf{y})
&:= \Phi(\mathbf{x})^\top \overline{\Phi(\mathbf{y})} = \sum_{j=1}^{D/2}
\Big(\sum_{i:\,h(i)=j} \sigma(i) x_i\Big)
\Big(\sum_{m:\,h(m)=j} \overline{\sigma(m)}\, y_m\Big)\\[4pt]
&= \sum_{i=1}^{d} |\sigma(i)|^2 x_i y_i
  + \sum_{i \neq m} \mathbf{1}\{h(i) = h(m)\}\, \sigma(i)\overline{\sigma(m)}\, x_i y_m.
\end{align}

Taking expectation with respect to $h$ and $\sigma$, we get
\begin{align}
\mathbb{E}\!\left[\hat{k}_C(\mathbf{x}, \mathbf{y})\right]
= \sum_{i=1}^{d} \mathbb{E}\!\left[|\sigma(i)|^2\right] x_i y_i
= \sum_{i=1}^{d} x_i y_i
= \langle \mathbf{x}, \mathbf{y} \rangle.
\label{eq:complex_inner_unbiased}
\end{align}

Since the right-hand side of \eqref{eq:complex_inner_unbiased} is real, the same unbiasedness holds for the estimator, which takes the real part of the complex sketch:
\begin{align}
\mathbb{E}\!\left[\hat{k}(\mathbf{x}, \mathbf{y})\right]
&= \mathbb{E}\left[\operatorname{Re}\left\{\hat{k}_C(\mathbf{x}, \mathbf{y})\right\}\right]
= \operatorname{Re}\left\{\mathbb{E}\left[\hat{k}_C(\mathbf{x}, \mathbf{y})\right]\right\}
= \langle \mathbf{x}, \mathbf{y} \rangle.
\end{align}

To find the exact variance of the estimator, we leverage the Complex-to-Real (CtR) framework stated in our paper. 
\noindent Now, we work out with \(\mathbb{E}\!\left[|\hat{k}_C(\mathbf{x}, \mathbf{y})|^2\right]\). \\
\begin{align}
    \mathbb{E}\!\left[|\hat{k}_C(\mathbf{x}, \mathbf{y})|^2\right]
&= \mathbb{E}\!\left[|\Phi(\mathbf{x})^\top \overline{\Phi(\mathbf{y})} |^2\right] =\mathbb{E}\!\left[\left|\sum_{j=1}^{D/2}\left(\sum_{i=1}^{d} \delta_{ji}\sigma(i)x_i\right)
\left(\sum_{m=1}^{d} \delta_{jm}\overline{\sigma(m)}\,y_m\right) \right|^2\right],\\ 
&= \sum_{j,j'}^{D/2}\sum_{i,m,i',m'}^d 
\mathbb{E}\!\big[\delta_{ji}\delta_{jm}\delta_{j'i}\delta_{j'm'}\big]\;
\mathbb{E}\!\big[\sigma(i)\overline{\sigma(m)}\,\overline{\sigma(i')} \sigma(m')\big]\;
x_i y_m x_{i'} y_{m'}.
\end{align}

Hence,
\[
\mathbb{E}\!\big[\sigma(i)\overline{\sigma(m)}\,\overline{\sigma(i')}\sigma(m')\big]\neq 0
\iff \{i,m,i',m'\}\ \text{appear in pairs},
\]
\begin{enumerate}
    \item With surviving index configurations, when $\mathbf{j = j'}$:
\begin{itemize}
    \item $ \mathbf{i =m =i' = m'}: \mathbb{E}[\delta_{ji}]\mathbb{E}[|\sigma(i)|^{4}]x_{i}^{2}y_{i}^{2}$,
    \item $ \mathbf{i =m \neq i' = m'}: \mathbb{E}[\delta_{ji}\delta_{ji'}]\mathbb{E}[|\sigma(i)|^{2}|\sigma(i')|^{2}]x_{i}y_{i}x_{i'}y_{i'}$,
    \item $ \mathbf{i =i' \neq m = m'}: \mathbb{E}[\delta_{ji}\delta_{jm}]\mathbb{E}[|\sigma(i)|^{2}|\sigma(m)|^{2}]x_{i}^{2}y_{m}^{2}$,
\end{itemize}

\item  With surviving index configurations, when $\mathbf{j \neq j'}$:
\begin{itemize}
    \item $ \mathbf{i =m \neq i' = m'}: \mathbb{E}[\delta_{ji}\delta_{j'i'}]\mathbb{E}[|\sigma(i)|^{2}|\sigma(i')|^{2}]x_{i}y_{i}x_{i'}y_{i'} $,
\end{itemize}
\end{enumerate}

\noindent Thus,
\begin{align}
\mathbb{E}\!\left[|\hat{k}_C(\mathbf{x}, \mathbf{y})|^2\right]
&= \sum_{i=1}^d x_i^2 y_i^2 + \frac{2}{D}\sum_{i\neq i'} (x_{i}y_{i}x_{i'}y_{i'} + x_{i}^{2}y_{i'}^{2}) + \frac{D-2}{D}\sum_{i\neq i'}x_{i}y_{i}x_{i'}y_{i'}, \label{norm_squared_part_1}\\
&= \langle \mathbf{x,y}\rangle^{2} + \frac{2}{D}\sum_{i\neq i'}  x_{i}^{2}y_{i'}^{2} . \label{norm_squared_part_2}
\end{align}

\noindent Now, similarly we get

\begin{align}
\mathbb{E}\!\left[(\hat{k}_C(\mathbf{x}, \mathbf{y}))^2\right]
&= \sum_{i=1}^d x_i^2 y_i^2 + \frac{4}{D}\sum_{i\neq i'} (x_{i}y_{i}x_{i'}y_{i'}) + \frac{D-2}{D}\sum_{i\neq i'}x_{i}y_{i}x_{i'}y_{i'}, \label{norm_squared_part_3}\\
&= \langle \mathbf{x,y}\rangle^{2} + \frac{2}{D}\sum_{i\neq i'} (x_{i}y_{i}x_{i'}y_{i'}) . \label{norm_squared_part_4}
\end{align}

\noindent Now, substitute values of $\mathbb{E}\!\left[|\hat{k}_C(\mathbf{x}, \mathbf{y})|^2\right]$ and  $\mathbb{E}\!\left[(\hat{k}_C(\mathbf{x}, \mathbf{y}))^2\right]$ and simplifying, we get

\begin{align}
\operatorname{Var}\!\big(\hat{k}(\mathbf{x}, \mathbf{y})\big)
&= \frac{1}{D}\left(\|\mathbf{x}\|_2^2 \|\mathbf{y}\|_2^2 
+ \langle \mathbf{x}, \mathbf{y} \rangle^2 
- 2 \sum_{i=1}^d x_i^2 y_i^2 \right).
\end{align}
Hence, from the above variance bound, it follows that \texttt{CountSketch} with complex random variables does not provide any variance reduction over its real-valued counterpart.
\end{proof}

\section{Further Experiments}\label{Further_exp}

\subsection{Extended Evaluation on Variance and Time in different setups} \label{Extended_experiments}

In this section, we present additional experimental results that complement the
main empirical evaluation reported in Section~\ref{Experiments_sec}. These
experiments extend the comparison between real and \emph{complex-to-real (CtR)}
sketching constructions across additional datasets, higher polynomial degrees,
and varied embedding dimension. We report both approximation quality (via KL
divergence between exact and sketched kernels) and wall-clock sketch
construction time. Together, these results provide a more detailed view of the
variance and construction time trade-offs of CtR \texttt{TensorSketch}
relative to its real-valued counterparts and JL-type baselines.

\begin{figure*}[htbp]
\centering
\includegraphics[width=\linewidth]{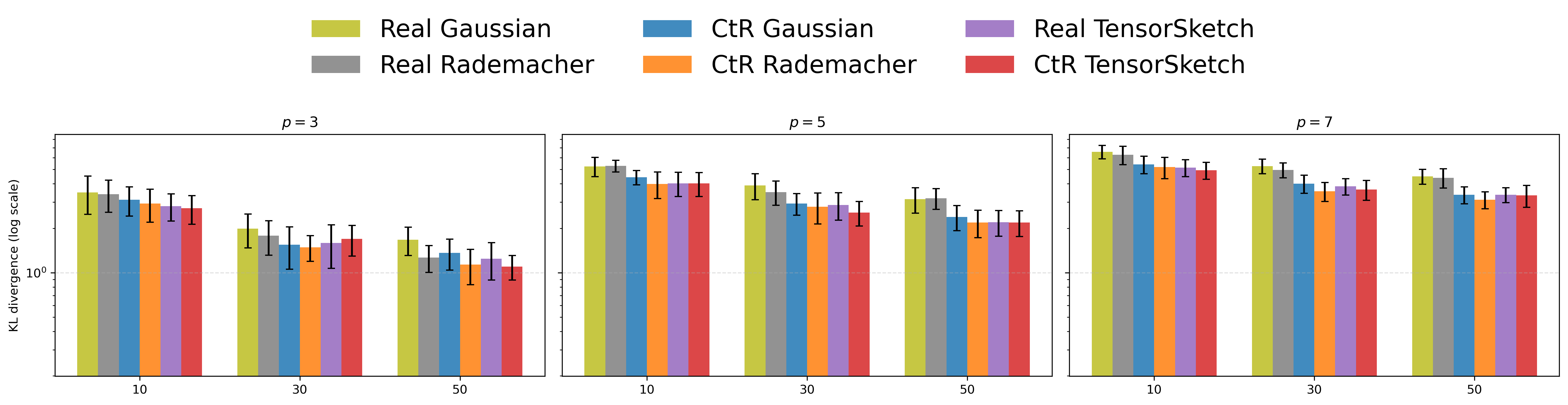}
\caption{
\textbf{KL divergence on the MAGIC Gamma Telescope dataset.}
We compare Real and \emph{complex-to-real (CtR)} Gaussian and Rademacher JL sketches,
together with Real and CtR TensorSketch.
Results are shown for polynomial degrees $p \in \{3,5,7\}$ and sketch dimensions
$D \in \{d,3d,5d\}$ with $n=3000$ standardized and $\ell_2$-normalized samples.
Bars report the mean KL divergence over 20 independent trials.
}
\label{fig:magic_kl_unoptimized_jl}
\end{figure*}

\begin{figure*}[htbp]
\centering
\includegraphics[width=\linewidth]{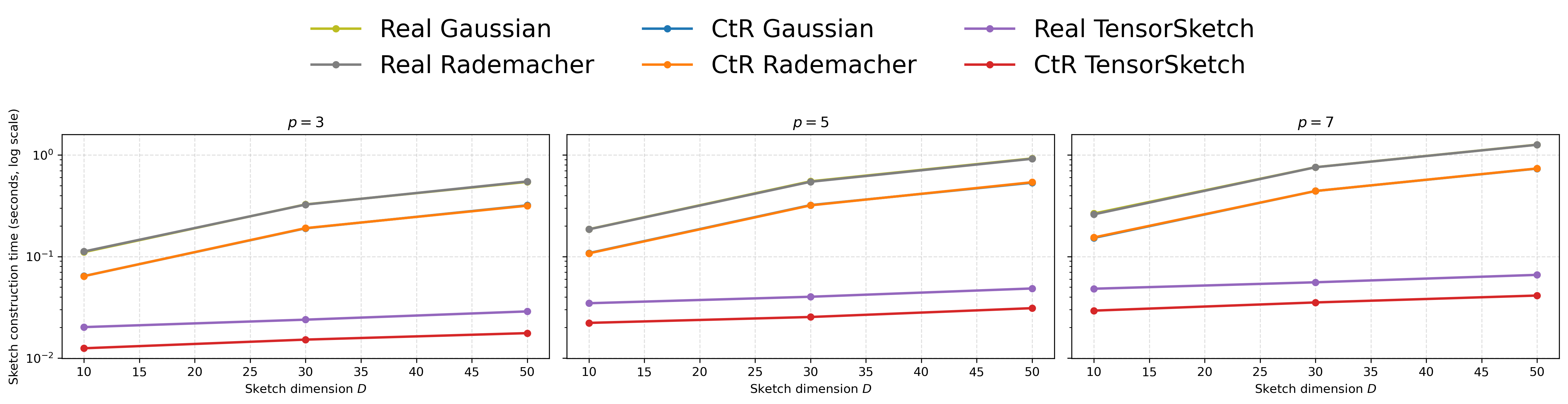}
\caption{
\textbf{Wall-clock sketch construction time on the MAGIC dataset.} Methods compared include Real and \emph{complex-to-real (CtR)} Gaussian and
Rademacher JL sketches, with Real and CtR TensorSketch.
Results are shown for polynomial degrees $p \in \{3,5,7\}$ and sketch dimensions
$D \in \{d, 3d, 5d\}$ with $n=3000$ standardized and $\ell_2$-normalized samples.
Each point reports the mean runtime over 20 independent trials for feature sketch
construction only. 
}
\label{fig:magic_unoptimized_time}
\end{figure*}

\begin{figure*}[htbp]
\centering
\includegraphics[width=\linewidth]{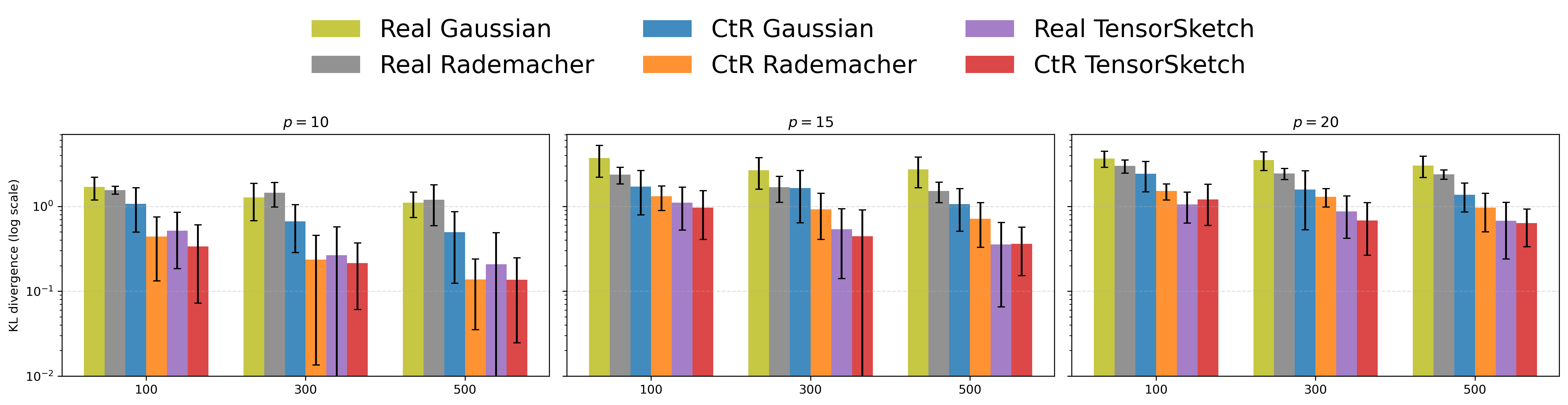}
\caption{
\textbf{KL divergence on the synthetic dataset ($d=2$).}
We compare Real and \emph{complex-to-real (CtR)} Gaussian and Rademacher JL sketches,
together with Real and CtR TensorSketch, for approximating degree-$p$
polynomial kernels on synthetic Gaussian data ($n=3000$, dimension $d=2$,
standardized and $\ell_2$-normalized).
Results are shown for polynomial degrees $p \in \{10,15,20\}$ and sketch
dimensions $D \in \{100,300,500\}$.
Bars report the mean KL divergence between the exact kernel and the
sketch-based approximation over 20 independent trials.
}
\label{fig:synth_kl_unoptimized_jl}
\end{figure*}

\begin{figure*}[htbp]
\centering
\includegraphics[width=\linewidth]{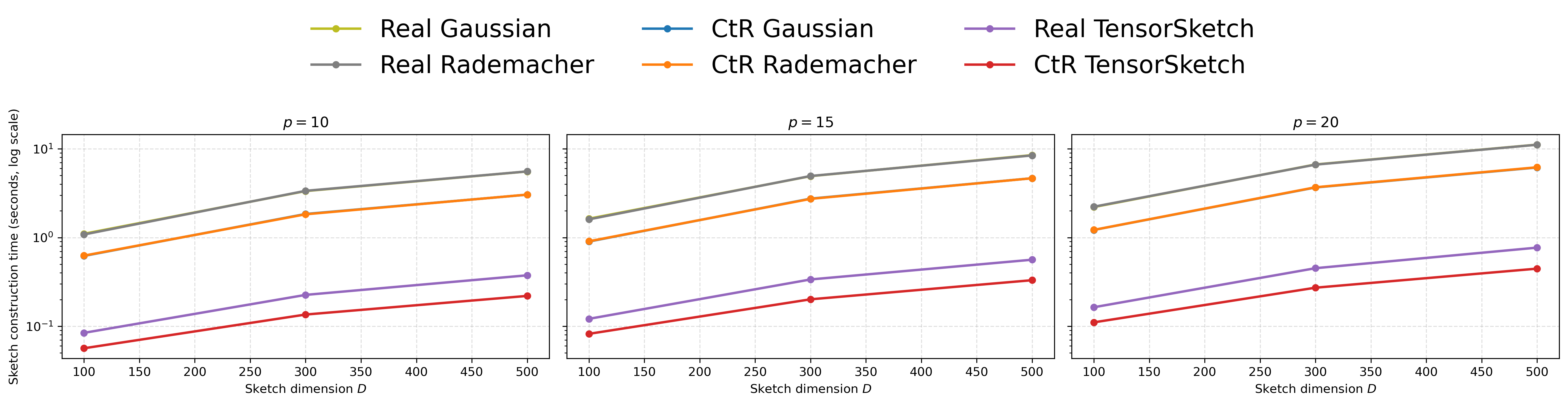}
\caption{
\textbf{Wall-clock sketch construction time on a synthetic dataset.}
Methods compared include Real and \emph{complex-to-real (CtR)} Gaussian and
Rademacher JL sketches, together with Real and CtR TensorSketch.
Results are shown for polynomial degrees $p \in \{10,15,20\}$ and sketch
dimensions $D \in \{100,300,500\}$ with $n=3000$ standardized and
$\ell_2$-normalized samples in dimension $d=2$.
Each point reports the mean runtime over 20 independent trials for
feature sketch construction only (log-scale on the $y$-axis).
}
\label{fig:synthetic_unoptimized_time}
\end{figure*}

\clearpage

\subsection{Evaluation on Frobenius Normalized Relative Error}

 We also test our proposed method along with other baselines using the Frobenius normalized relative error, which is defined as $\|K - \hat{K}\|_F / \|K\|_F$. In this formula, $K$ represents the exact kernel matrix, while $\hat{K}$ represents the approximated kernel matrix produced by the sketching method. Frobenius error metric is a standard benchmark in kernel approximation and is widely used to measure sketching accuracy \cite{wacker2022improved,pmlr-v206-wacker23a}. The datasets and experimental setups used here are identical to our previous experiments that evaluated accuracy using the KL divergence metric (Section~\ref{Experiments_sec} and Appendix~\ref{Extended_experiments}). Testing with this alternative metric ensures that our method performs well irrespective of the choice of error metric.

\begin{table*}[htbp]
\centering
\begin{tabular}{@{}llc@{}}
\toprule
\textbf{Degree ($p$)} & \textbf{Method} & \textbf{Variance} \\ \midrule
15 
& Real Gaussian & $4.41 \times 10^{4}$ \\
& Real Rademacher & $3.20 \times 10^{1}$ \\
& CtR Gaussian & $6.35 \times 10^{1}$ \\
& CtR Rademacher & $4.16 \times 10^{3}$ \\
& Real TensorSketch & $2.84 \times 10^{-5}$ \\
& \textbf{CtR TensorSketch (Ours)} & \textbf{1.57 $\times$ 10$^{-5}$} \\ \midrule

20
& Real Gaussian & $3.80 \times 10^{1}$ \\
& Real Rademacher & $6.78 \times 10^{3}$ \\
& CtR Gaussian & $2.76 \times 10^{1}$ \\
& CtR Rademacher & $1.58 \times 10^{1}$ \\
& Real TensorSketch & $1.28 \times 10^{-2}$ \\
& \textbf{CtR TensorSketch (Ours)} & \textbf{5.05 $\times$ 10$^{-5}$} \\ \midrule

25 
& Real Gaussian & $2.65 \times 10^{0}$ \\
& Real Rademacher & $9.06 \times 10^{3}$ \\
& CtR Gaussian & $1.82 \times 10^{1}$ \\
& CtR Rademacher & $2.58 \times 10^{1}$ \\
& Real TensorSketch & $3.68 \times 10^{-1}$ \\
& \textbf{CtR TensorSketch (Ours)} & \textbf{3.58 $\times$ 10$^{-4}$} \\ \midrule

30 
& Real Gaussian & $7.24 \times 10^{0}$ \\
& Real Rademacher & $1.00 \times 10^{0}$ \\
& CtR Gaussian & $5.44 \times 10^{2}$ \\
& CtR Rademacher & $3.21 \times 10^{3}$ \\
& Real TensorSketch & $1.27 \times 10^{-3}$ \\
& \textbf{CtR TensorSketch (Ours)} & \textbf{4.59 $\times$ 10$^{-6}$} \\ \bottomrule
\end{tabular}
\caption{Variance of the Frobenius normalized relative error ($\|K - \hat{K}\|_F / \|K\|_F$) evaluated on the real-world MAGIC dataset ($n=1000, d=10$) with a compressed sketch dimension of $D=128$. Results are aggregated over 20 independent trials across high-degree polynomial kernels ($p \in \{15, 20, 25, 30\}$). CtR TensorSketch strictly outperforms Real TensorSketch in estimation stability, confirming the variance reduction.}
\label{tab:frob_variance}
\end{table*}
\subsection{Evaluation on Downstream Tasks}

To demonstrate practical application beyond kernel matrix approximation, we evaluate our method on downstream binary classification tasks using a linear Support Vector Machine (SVM). The datasets used in these experiments are generated using standard ML libraries (such as scikit-learn) to create two highly overlapping classes with non-linear decision boundaries. We test two separate configurations under extreme compression to a sketch dimension of $D=64$.

We compare the sketching methods against the Exact Polynomial Kernel, which computes the full kernel matrix using the mathematical formula $K(x, y) = (x^\top y)^p$ without any compression or approximation. We evaluate the techniques using two metrics: test classification accuracy and total execution time. The time metric tracks the combined end-to-end wall-clock time required for both data compression and the subsequent classifier training. We report this total time to fully reflect the complete workload required to obtain the final classification model.

As shown in Table~\ref{tab:downstream_setup1} and Table~\ref{tab:downstream_setup2}, all approximation methods experience a drop in accuracy compared to the exact kernel due to the tight bottleneck of the compressed dimension. However, our proposed CtR \texttt{TensorSketch} consistently achieves the highest accuracy among all baselines while requiring the shortest total execution time.

\begin{table}[htbp]
\centering
\begin{tabular}{@{}lcc@{}}
\toprule
\textbf{Method} & \textbf{Accuracy} & \textbf{Time (s)} \\ \midrule
Exact Poly Kernel & 0.8756 & 16.0724 \\ 
TensorSketch (Real) & 0.4911 & 0.2390 \\
\textbf{TensorSketch (CtR) [Ours]} & \textbf{0.5289} & \textbf{0.1479} \\
JL (CtR Rademacher) & 0.5011 & 3.8279 \\
JL (CtR Gaussian) & 0.4978 & 3.7708 \\ \bottomrule
\end{tabular}
\caption{Downstream Classification Performance (Linear SVM) for \textbf{Setup 1}. Evaluated on a synthetic dataset ($n=3000$) with dimension $d=20$. We approximate a polynomial kernel of degree $p=15$ using a sketch dimension of $D=64$. Our proposed method achieves the best sketching accuracy in the shortest time.}
\label{tab:downstream_setup1}
\end{table}

\begin{table}[htbp]
\centering
\begin{tabular}{@{}lcc@{}}
\toprule
\textbf{Method} & \textbf{Accuracy} & \textbf{Time (s)} \\ \midrule
Exact Poly Kernel & 0.8567 & 8.5344 \\ 
TensorSketch (Real) & 0.4978 & 0.2763 \\
\textbf{TensorSketch (CtR) [Ours]} & \textbf{0.5044} & \textbf{0.1733} \\
JL (CtR Rademacher) & 0.4567 & 3.3204 \\
JL (CtR Gaussian) & 0.4889 & 3.2954 \\ \bottomrule
\end{tabular}
\caption{Downstream Classification Performance (Linear SVM) for \textbf{Setup 2}. Evaluated on a synthetic dataset ($n=3000$) with dimension $d=10$. We approximate a polynomial kernel of higher degree $p=25$ using a sketch dimension of $D=64$.}
\label{tab:downstream_setup2}
\end{table}

\subsection{Variance Analysis through Numerical Table}

The primary purpose of this section is to provide a clear and direct validation of our experimental findings from Section~\ref{Experiments_sec}. In our main evaluation, we rely on visual plots to demonstrate the accuracy and computational time of our proposed CtR \texttt{TensorSketch}. Graphs, especially those that use a logarithmic scale, can visually compress the performance gap between methods. Looking at the exact numbers, we can clearly confirm our previous experiments and highlight the advantage of our method. Specifically, these numbers reveal how our approach successfully reduces the variance from $3^p/D$ in traditional methods to $2^p/D$ in our complex-to-real design. 

As shown in Table~\ref{tab:variance_kl}, our proposed CtR \texttt{TensorSketch} consistently achieves lower variance compared to all the baselines across all degrees.
\begin{table*}[htbp]
\centering
\begin{tabular}{@{}llc@{}}
\toprule
\textbf{Degree ($p$)} & \textbf{Method} & \textbf{Variance} \\ \midrule
15 
& Real Gaussian & 0.2634 \\
& Real Rademacher & 0.3839 \\
& CtR Gaussian & 0.2192 \\
& CtR Rademacher & 0.2029 \\
& Real TensorSketch & 0.5496 \\
& \textbf{CtR TensorSketch (Ours)} & \textbf{0.1739} \\ \midrule

20
& Real Gaussian & 0.3545 \\
& Real Rademacher & 0.4673 \\
& CtR Gaussian & 0.6996 \\
& CtR Rademacher & 0.5246 \\
& Real TensorSketch & 0.3932 \\
& \textbf{CtR TensorSketch (Ours)} & \textbf{0.3888} \\ \midrule

25
& Real Gaussian & 0.3158 \\
& Real Rademacher & 0.2560 \\
& CtR Gaussian & 0.4395 \\
& CtR Rademacher & 0.2294 \\
& Real TensorSketch & 0.3209 \\
& \textbf{CtR TensorSketch (Ours)} & \textbf{0.2367} \\ \midrule

30
& Real Gaussian & 0.4785 \\
& Real Rademacher & 0.2932 \\
& CtR Gaussian & 0.3266 \\
& CtR Rademacher & 0.2883 \\
& Real TensorSketch & 0.2443 \\
& \textbf{CtR TensorSketch (Ours)} & \textbf{0.1939} \\ \bottomrule
\end{tabular}
\caption{Variance of KL Divergence for high-degree polynomial kernels. Evaluated on a positive orthant synthetic dataset ($n=1000, d=10$) with extreme compression ($D=64$), aggregated over 20 independent trials. CtR TensorSketch strictly dominates Real TensorSketch in estimation variance across all polynomial degrees, empirically validating the theoretical $2^p/D$ scaling advantage.}
\label{tab:variance_kl}
\end{table*}

\end{document}